\definecolor{mygray}{gray}{0.5}
\long\def\comment#1{}
\long\def\comments#1{}
\author{Yifan Qiao, Yingrui Yang, Haixin Lin, Tao Yang}
\affiliation{%
  \institution{Department of Computer Science, University of California}
  \city{Santa Barbara}
  \state{California}
  \postcode{93106}
\country{USA}
}
\email{{yifanqiao, yingruiyang, haixinlin, tyang}@cs.ucsb.edu}
\newcommand{\todo}[1]{\textcolor{red}{TODO: #1}\PackageWarning{TODO:}{#1!}}
\begin{document}

\title{Optimizing Guided Traversal for Fast Learned Sparse Retrieval}

\begin{abstract}

Recent studies show that BM25-driven dynamic index skipping 
can greatly accelerate MaxScore-based document retrieval based on the learned sparse representation derived by DeepImpact. 
This paper investigates the effectiveness of such a traversal guidance strategy during top $k$
retrieval when using other models such as SPLADE and uniCOIL, and  
finds that unconstrained BM25-driven skipping could have a visible relevance degradation 
when the BM25 model is not well aligned with a learned weight model or when retrieval depth $k$ is small. 
This paper  generalizes the previous work and  
optimizes the BM25 guided index traversal with a two-level pruning control scheme and model  alignment 
for fast retrieval using a sparse representation. 
Although there can be a cost of increased latency, the proposed scheme
 is much faster than the original MaxScore method without BM25 guidance  while retaining the relevance effectiveness. 
This paper analyzes the competitiveness  of  this  two-level pruning scheme,
and evaluates its  tradeoff in ranking relevance and 
time efficiency when searching several test datasets.

\comments{

Learned sparse representations have been developed to
deploy term weights computed by a deep neural model such as a transformer and can
deliver strong relevance results, together with document expansion. 
A downside is that retrieval using learned neural term weights is much slower than using the traditional BM25 model. 
A recent study shows that BM25-driven dynamic index pruning coupled with the DeepImpact  model can greatly accelerate  document retrieval 
while retaining relevance effectiveness.  This paper generalizes this scheme 
and provides an evaluation on the effectiveness of the proposed techniques in time efficiency and relevance.

OLD April 2022
We proposes  a dual skipping guidance scheme with hybrid scoring to accelerate document  
retrieval that uses learned sparse representations while still delivering a good relevance.
This scheme  uses both lexical BM25 and learned neural term weights to bound and compose the rank score of
a candidate document separately for skipping and final ranking, and maintains two top-$k$ thresholds during inverted index
traversal. 
This paper evaluates time efficiency and ranking relevance of the proposed scheme  in searching MS MARCO TREC datasets.
 
}

\end{abstract}

\maketitle
\pagestyle{plain}

\section{Introduction}



\comments{
\begin{figure}[htbp]
\begin{center}
  \includegraphics[width=\columnwidth]{LearnedScoreTopKAccelerate/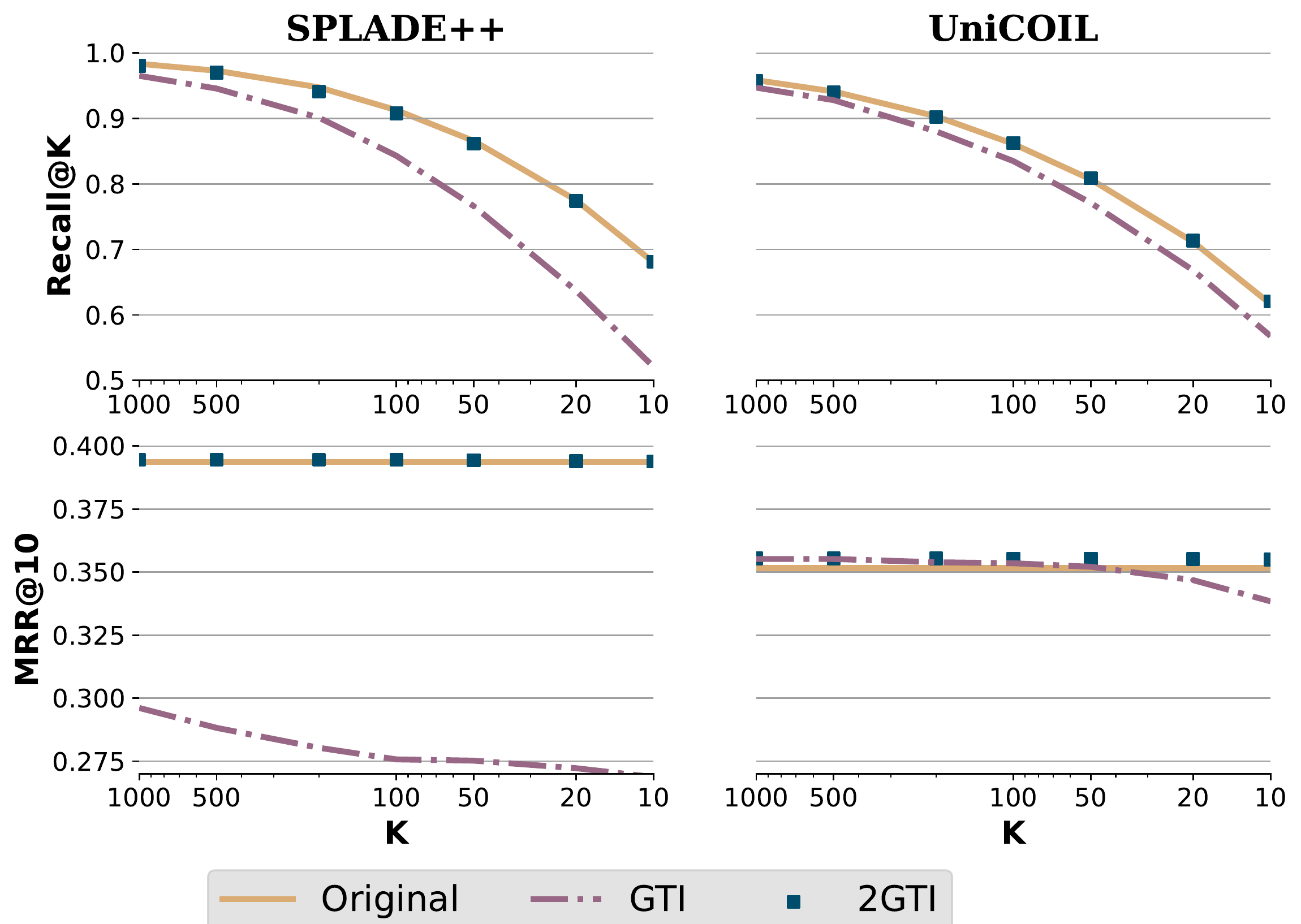}
\end{center}
  \caption{Recall@K and MRR@10 on different retrieval depth K.}
  \label{fig:overview}
\end{figure}
}

Document retrieval for searching a large dataset
often uses a sparse representation of document feature vectors implemented  as 
an inverted index which associating each search term with a list of documents containing such a term. 
Recently learned sparse representations have been developed to
compute term weights using a neural model such as transformer based retriever~\cite{Dai2020deepct, Bai2020SparTerm,
Formal2021SPLADE, Formal2021SPLADEV2,
mallia2021learning, Lin2021unicoil} and
deliver strong relevance results,
together with document expansion (e.g.  ~\cite{Cheriton2019doct5query}). 
A downside is that top $k$ 
document retrieval latency using a learned sparse representation is much large than using the BM25 model  
as discussed in  ~\cite{mallia2021learning, mackenzie2021wacky}. 
In the traditional BM25-based document retrieval with additive ranking, 
a dynamic index pruning strategy based on top $k$ threshold is very effective by computing 
the rank score upper bound on the fly for each visited document 
during index traversal in order to  skip low-scoring documents that are unable to appear in the final top $k$ list.
Well known traversal algorithms  with such dynamic pruning strategies  include    MaxScore~\cite{Turtle1995} and  WAND~\cite{WAND}, and 
their block-based versions Block-Max WAND (BMW)~\cite{BMW} and Block-Max MaxScore (BMM)~\cite{Chakrabarti2011IntervalbasedPF-BMM, Dimopoulos2013OptimizingTD}. 

Mallia et al.~\cite{mallia2022faster}  propose to use BM25 to guide traversal, called GT, 
for fast learned sparse retrieval because the distribution of learned weights results in less pruning opportunities
and they  conducted an evaluation  with retrieval model DeepImpact~\cite{mallia2021learning}. One variation they propose is to compute the final rank scoring as a linear 
combination of the learned weights and BM25 weights, denoted as GTI.
GT is a special case of GTI and  this paper treats GTI as the main baseline.
Since the BM25 weight for a document term pair may not exist in a learned sparse index,
zero filling is used in Mallia et al.~\cite{mallia2022faster} to align the BM25 and learned weight models.
During our evaluation using GT for SPLADE v2 and its revision SPLADE++~\cite{Formal2021SPLADEV2,Formal_etal_SIGIR2022_splade++}, 
we find that 
as retrieval depth $k$ decreases, BM25 driven skipping becomes too aggressive in dropping 
documents desired by top $k$ ranking  based on learned term weights, which 
can cause a significant  relevance degradation.
In addition,  there  is still some room to further improve index alignment of GTI for more accurate
BM25 driven pruning.


To address the above issues, 
we improve our earlier pruning study on dual guidance with combined BM25 and learned weights~\cite{Qiao20222GT}. 
Our work generalizes GTI by constraining the pruning influence of BM25 
and providing an alternative smoothing method to align the BM25 index with learned weights.  
In Section~\ref{sect:maxscore}, we  
propose a two-level parameterized  guidance scheme with index alignment, 
called 2GTI, to manage pruning decisions during  MaxScore based traversal.
We analyze some formal properties of 2GTI on its  relevance behaviors and configuration conditions when 2GTI
outperforms  a two-stage top $k$ search algorithm for a query in relevance. 

Section~\ref{sect:evaluation} and Appendix~\ref{sect:extraeval} 
present an evaluation of 
2GTI with  SPLADE++~\cite{Formal2021SPLADE, Formal2021SPLADEV2,Formal_etal_SIGIR2022_splade++}
and uniCOIL~\cite{Lin2021unicoil, 2021NAACL-Gao-COIL} in addition to DeepImpact~\cite{mallia2021learning} when using MaxScore 
on the MS MARCO datasets.
This evaluation shows  that when retrieval depth $k$ is small, 
or when the BM25 index is not well aligned with the underlying learned sparse representation, 
2GTI can outperform GTI and retain  relevance more effectively. 
In some cases, there is a tradeoff that 2GTI based retrieval may be slower than that of GTI while 2GTI is still much faster than the original MaxScore method without BM25 guidance.
2GTI is also effective for the BEIR datasets in terms of  the zero-shot relevance and retrieval latency. 
In Appendix~\ref{sect:bmw}, we have
extended the use of 2GTI for  a BMW-based algorithm such as VBMW~\cite{Mallia2017VBMW}. 
We demonstrate that 2GTI with VBMW can be useful for a class of short queries and when $k$ is small. 


\comments{
This paper  is focused on the acceleration of  retrieval when using the learned sparse representations.
BMM has been used recently in learned sparse representations as their tokenizes tend to generate 
longer queries than the orginal ones.

As posting lists are often compressed in blocks, BMW~\cite{BMW}
divides each posting list into blocks,
stores the maximum score per block
and leverages such scores to skip unnecessary
decompression and inspections of posting blocks.

There are various further improvements to skip more documents effectively 
(e.g. ~\cite{
2003CIKM-Broder-SafeThreshold, 2013CandidateFiltering, 
2017WSDM-DAAT-SAAT,
Mallia2017VBMW, 
Tonellotto2018, Mackenzie2018, CIKM2020Additivity,
2018SIGIRKane, 2019ECIRMallia, 2019SIGIR-Petri,
2020CIKMComparison, 2012WSDMShan, khattab2020finding}), and
these studies  typically evaluate with BM25-based  term weights~\cite{Jones2000}.

the main contribution of this paper is 
an add-on  control scheme
to provide dual-threshold skipping guidance to a retrieval algorithm,
and  to employ  a hybrid scoring with a linear combination of
BM25 and learned term weights for both skipping judgment and final ranking. 
The evaluation in  this paper with MS MARCO datasets  shows  that 
when applied  to variable-sized BMW~\cite{Mallia2017VBMW}, the proposed scheme can deliver a very competitive relevance while  
its retrieval speed is close to  BM25 retrieval with document expansion.
Both mean response time and 95th percentile time drop  significantly (e.g. varying from 1.5x to 4.3x) compared to the original baselines. 
Our scheme is significantly faster than a simple  threshold enlarging strategy~\cite{WAND, 2017WSDM-DAAT-SAAT}
in reaching a similar relevance level and can leverage such a strategy
for further  time  reduction when   $k$ is relatively large. 

}


\comments{
Such a technique was started  from DeepCT (Dai and Callan, 2019) and
recently, the DeepImpact work \cite{mallia2021learning} have  introduced a transformer model to produce sparse index scores
after using  document term expansion techniques based on Doc2query~\cite{X}.

Due to term expansion and distribution, pruning algorithms become less effective.~\cite{X}
}

\section{Background and Related Work}
\label{sect:background}


The top-$k$ document retrieval problem identifies top ranked results in matching a query.
A document representation uses a feature vector to capture  the semantics of a document. 
If these vectors contain much more zeros than non-zero entries, then such a representation is considered sparse.
For a large dataset, document retrieval often uses a
simple additive formula as the first stage of search and it computes the rank score of each document $d$ as:
\begin{equation}
\label{eq:BM25}
\sum_{t \in Q} w_t \cdot  w(t,d),
\end{equation}
where $Q$ is the set  of all search terms,
$w(t,d)$ is a weight contribution of  term $t$  in document $d$, 
and  $w_t$ is a document-independent or query-specific term weight. 
Assume that $w(t,d)$ can be statically or dynamically scaled, 
this paper  views  $w_t=1$ for simplicity of presentation. 
An example of such formula is BM25~\cite{Jones2000} which is  widely used.
For a sparse representation, a retrieval algorithm  often  uses 
an \textit{inverted index} with a set of terms, and a \textit{document posting list} of each term.
A posting record in this  list  contains document ID and  its weight for the corresponding term.

{\bf Threshold-based skipping.} During the traversal of posting lists in document retrieval,
the  previous studies have advocated dynamic pruning strategies 
to skip low-scoring documents, which cannot appear on the final top-$k$ list
~\cite{WAND,strohman2007efficient}.
\comments{
Some information of a posting block $p$
can be accessible without decompression, and such information contains
the maximum weighted term feature score among all documents
and the maximum document ID in this block, denoted  as $BlockMax(p)$ and  $MaxDocID(p)$.
}
To skip the scoring of a document, a pruning  strategy computes the upper bound rank score of a candidate document $d$,
referred to as $Bound(d)$. 

If $Bound(d) \leq  \theta$ where $\theta$ is the rank score threshold in the top final $k$ list, this document can be skipped. 
For example, WAND~\cite{WAND} uses the maximum term weights of documents  of each posting list to determine the rank score upper bound of
a pivot document while BMW~\cite{BMW} and its variants (e.g. ~\cite{Mallia2017VBMW}) optimize WAND 
use  block-based maximum weights to compute the score upper bounds. MaxScore~\cite{Turtle1995} uses 
term partitioning and the top-$k$ threshold to skip unnecessary  index visitation and scoring  computation. 
Previous work has also pursued  a ``rank-unsafe'' skipping strategy by  deliberately over-estimating the current top-$k$ threshold 
by a factor~\cite{WAND, 2012SIGIR-SafeThreshold-Macdonald, 2013WSDM-SafeThreshold-Macdonald, 2017WSDM-DAAT-SAAT}.
\comments{
In general, dynamic pruning with document skipping is   often used 
together with the document-at-a-time  or  term-at-a-time  traversal strategy~\cite{WAND,strohman2007efficient,Turtle1995,BMW}.

The above skipping is considered to be rank-safe up to $k$  in the sense that  the top-$k$ documents produced are ranked correctly.
Previous work has also pursued  a ``rank-unsafe'' skipping strategy by  deliberately over-estimating the current top-$k$ threshold 
by a factor of $F$~\cite{WAND, 2012SIGIR-SafeThreshold-Macdonald, 2013WSDM-SafeThreshold-Macdonald, 2017WSDM-DAAT-SAAT}.
There are also related strategies to obtain an accurate  top-$k$ threshold earlier, 
e.g. ~\cite{TwoTierBMW, 2019SIGIR-Petri, 2020CIKMComparison, 2019CIKMYafay,Shao2021SIGIR}.
While we can benefit from these studies, this paper does not study them because  they represent orthogonal optimizations. 
}

\comments{
Each block stores the maximum feature score per block, and 
the upper rank score of a document is bounded by the sum of the maximum block score of the blocks in which this document
may reside, and if such an estimated rank score bound is smaller than the top-$k$ threshold, the corresponding document can be
skipped for evaluation.

}

{\bf Learned sparse representations.}
Earlier sparse representation studies are conducted  in \cite{Zamani2018SNRM},
DeepCT~\cite{Dai2020deepct}, and SparTerm~\cite{Bai2020SparTerm}. 
\comments{
\citet{Dai2020deepct} learns  contextualized term weights to replace TF-IDF weights.  
}
Recent work on this subject includes 
SPLADE~\cite{Formal2021SPLADE, Formal2021SPLADEV2,Formal_etal_SIGIR2022_splade++}, which learns token importance for  document expansion with sparsity control. 
DeepImpact~\cite{mallia2021learning} learns neural term weights on documents expanded by DocT5Query~\cite{Cheriton2019doct5query}. 
Similarly, uniCOIL~\citep{Lin2021unicoil} extends the work of COIL~\citep{2021NAACL-Gao-COIL} for contextualized term weights. 
Document retrieval with term weights learned from a transformer has been found slow in ~\cite{mallia2022faster, mackenzie2021wacky}. 
Mallia  et al.~\cite{mallia2022faster} state that the MaxScore retrieval algorithm does not efficiently exploit the DeepImpact scores.
Mackenzie et al.~\cite{mackenzie2021wacky} view that the learned sparse term weights are ``wacky'' as they affect document skipping during retrieval thus they
advocate ranking approximation  with score-at-a-time  traversal. 

Our scheme uses  a hybrid combination of  BM25 and  learned term weights, 
motivated by the previous work in composing lexical and neural ranking~\cite{Yang2021WSDM-BECR,
Lin2021tctcolbert, gao2021complementing, ma2021replication,2022LinearInterpolationJimLin}. GTI adopts that for final ranking.  
A key difference in our work  is that hybrid scoring is used for two-level pruning control and  its formula can be different from final ranking.  The multi-level hybrid scoring difference provides an opportunity for additional pruning and its quality control.
Thus the outcome of 2GTI is not a simple linear ranking combination of BM25 and learned weights and 
two-level guided pruning yields a non-linear ensemble effect 
to improve time efficiency while retaining  relevance. 
Our evaluation will include a relevance and  efficiency comparison with MaxScore using a  simple linear combination.

This paper mainly focuses on MaxScore because it  has been shown  more effective for relatively longer queries~\cite{2019ECIRMallia}. 
We also consider VBMW~\cite{Mallia2017VBMW} because
it is generally acknowledged to represent the state of the art~\cite{mackenzie2021wacky} for many cases, 
especially   when $k$ is small  and the query length is short~\cite{2019ECIRMallia}. 

\comments{
allia  et al. ~\cite{mallia2021learning} 
The DeepImpact work~\cite{mallia2021learning} has 
ensured the document retrieval  time using  the MaxScore~\cite{Turtle1995} algorithm and
has found that the mean response time  using DeepImpact score is about 4.6X and 4.4X slower
than BM25 without and with document expansion using DocT5Query~\citep{Nogueira2019d2q} because whose DeepImpact score
distribution is not efficiently exploited by MaxScore.
58.64/13.24 =4.4
58.64/12.62 =4.6
This issue  of document retrieval  using the learned representation have been studied in ~\cite{mackenzie2021wacky}.
ackenzie et al.~\cite{mackenzie2021wacky} indicated  that 	``wacky weights'' generated by a transformer
educes the opportunities for index skipping and and early exiting for  standard DAAT techniques, and 
anking approximation  with score-at-a-time (SAAT) traversal can effectively reduce the retrieval latency.
ince exact ranking with SaaT is still slower than DAAT from Table 1 of ~\cite{mackenzie2021wacky} and approximate 
anking does carry visible relevance degradation, this 
aper evaluates the proposed techniques for a BMW-based DAAT traversal while our techniques are applicable for other traversals that use 
 threshold to skip documents.

to reduce pruning opportunities for posting record skipping, which makes the optimization
techniques proposed in WAND and BMW-based algorithms less effective.

allia  et al. ~\cite{mallia2021learning} 
The DeepImpact work~\cite{mallia2021learning} has 
easured the document retrieval  time using  the MaxScore~\cite{Turtle1995} algorithm and
hey found that the mean response time  using DeepImpact score is about 4.6X and 4.4X slower
han BM25 without and with document expansion using DocT5Query~\citep{Nogueira2019d2q} because whose DeepImpact score
istribution is not efficiently exploited by MaxScore.
58.64/13.24 =4.4
58.64/12.62 =4.6
This issue  issues of document retrieval  using the learned reprsentation have been studied in ~\cite{mackenzie2021wacky}.
ackenzie et al.~\cite{mackenzie2021wacky} indicated  that 	``wacky weights'' generated by a transformer
educes the opportunities for index skipping and and early exiting for  standard DAAT techniques, and 
anking approximation  with score-at-a-time (SAAT) traversal can effectively reduce the retrieval latency.
ince exact ranking with SaaT is still slower than DAAT from Table 1 of ~\cite{mackenzie2021wacky} and approximate ranking does carry visible 
relevance degrdation, this paper evaluates the proposed techniques for a BMW-based DAAT traversal while our techniques are applicable for other traversals that use 
a threshold to skip documents.
}

\section{Design Considerations}
\label{sect:design}


\begin{figure}[htbp]
\begin{center}
  \includegraphics[width=\columnwidth]{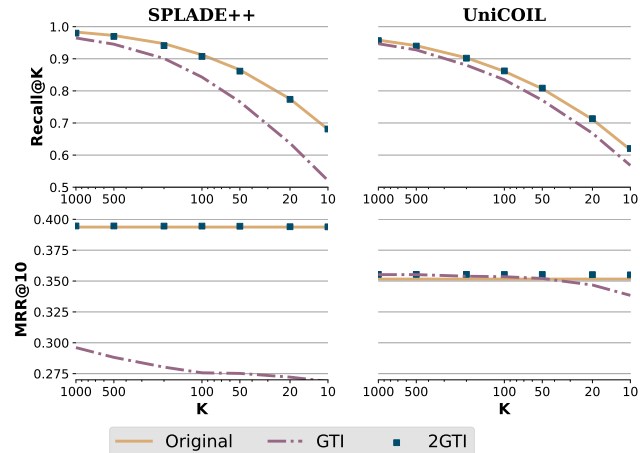}
\end{center}
  \caption{Recall@k and MRR@10 when $k$ varies. }
  \label{fig:overview}
\end{figure}


\comments{
\begin{figure}[htbp]
    \centering
    \includegraphics[width=1\linewidth]{./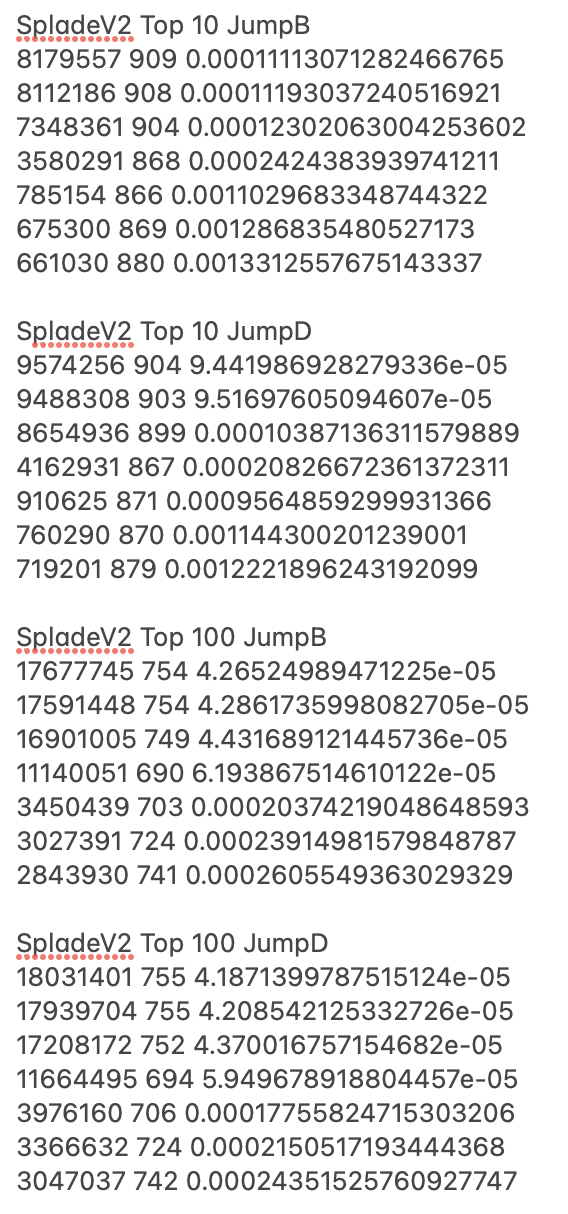}
    \caption{Ratio of making run skipping decisions when guided by BM25.}
    \label{fig:skip}
\end{figure}

Observations:

\begin{enumerate}
    \item With word removal, less doc can be skipped via BM25. The reason is because that the posting length is shorter.
    \item The smaller the alpha is, fewer doc can be skipped. This result indicates that the BM25 scores might be better at skipping.
    \item The final top-k results are almost identical  no matter the choice of alpha, because the accuracy of making wrong decisions is extremely low.
    \item This result cannot guide or explain the selection of the alpha value.
\end{enumerate}
}

Figure ~\ref{fig:overview} shows the performance of the original MaxScore retrieval algorithm without BM25 guidance, GTI,  and the proposed 2GTI scheme in terms of MRR@10 and recall@k when varying top $k$ in searching MS MARCO passages on Dev query set.
Here $k$ is the targeted number of top documents to retrieve and it is also called retrieval depth sometime in the literature.
Section~\ref{sect:evaluation}  has more  detailed dataset and index information.  
For both SPLADE++ and uniCOIL, we build the BM25 model following ~\cite{mallia2022faster} to expand passages  first using DocT5Query,
and then use the BERT's Word Piece tokenizer to tokenize the text, and align the token choices of BM25 with these learned models.
From Figure~\ref{fig:overview}, there are significant recall and MRR drops with GTI
when $k$ varies from 1,000  to 10.  There are two reasons contributing to the relevance drops. 
\begin{enumerate}
[leftmargin=*]
    \item When the number of top documents $k$ is relatively small, the relevance drops significantly. 
As $k$ is small, dynamically-updated top $k$ score threshold becomes closer to the maximum rank score of the best document.
Fewer documents fall into  top $k$ positions  and more documents below the updated  top $k$ score threshold
would be removed earlier. Then the accuracy of skipping becomes more sensitive. 
The discrepancy of BM25 scoring and learned weight scoring can cause good candidates
to be removed inappropriately by BM25 guided pruning, which can lead to a significant relevance drop for small $k$.
  
    
\item  The relevance drop for SPLADE++ with BM25 guided pruning is noticeably much more significant than uniCOIL.
That can be related to the fact that SPLADE++ expands tokens of each document  tokens differently and much more aggressively  than uniCOIL. 
As a result, 98.6\% of term document pairs in SPLADE++ index does not exist in the
BM25 index even after docT5Query document expansion while this number is 1.4\%  for uniCOIL. 
Thus, BM25 guidance can become less accurate and improperly skip  more good documents. 

  
\end{enumerate}
\comments{
This table also shows the alighnment degree between uniCOIL index and BM25-T5, and between SPLADE and BM25-T5.
BM25-T5 index is fairly aligned with uniCOIL index because they use the same docT5Query method, and their tokenization is 
based on BERT's Word Piece tokenizer.
The BM25-T5 index and  SPLADE index are not well aligned because BM25 is builted on docT5Query expansion while SpadeV2 has its own neural
method to learn the token weight, even they are unified under the same Word Piece tokenaization. 
}

With the above consideration, our objective is to control the influence of BM25 weights in a constrained manner for safeguarding relevance prudently,
and to develop better weight alignment when the BM25 index is not well aligned with the learned sparse index.
In Figure~\ref{fig:overview}, the recall@k number of 2GTI marked with blue squares is similar to that of the original method 
without BM25 guidance. Their MRR@10 numbers overlapped with each other,  forming a nearly-flat lines, which indicates
their MRR@10 numbers are similar even $k$ decreases.
The following two sections present our solutions in addressing the above two issues respectively. 


\comments{
The following experiment shows the impact of hybrid scoring in impacting the index prunning decision.
Run the top-10 or 100 BMM under splade score. 
For each skipping judgement, test if guided skipping under different beta would skip documents from top 10. If so, \# wrong skipped += 1.

\begin{itemize}
\item A. We run the orginal sparse retrieval on SPLADE v2 weights. 
When a pivot document is selected for evaluation at each iteration of traversal based on the top $k$ threshold based on the learned weights, 
we examine if using a hybrid scoring based on $\beta w_B(d) + (1-\beta) w_L(d)$ would cause the loss of top $k$ results for that document.
The algorithm continues with the original flow without any change and the check is carried again in the next document selected.

Correct Flow: {100\%: 3.3, 99\%: 2.9, 95\%: 2.6, 90\%: 2.2, 70\%: 1.5, 50\%: 1.0, 30\%: 0.2, 10\%: 0.1, 5\%: 0, 1\%: 0, 0\%: 0}.
\item B. We run the two-level guidance on the  sparse retrieval and apply to SPLADE v2 weights. 

GlobalGT Flow: {100\%: 3.3, 99\%: 2.9, 95\%: 2.6, 90\%: 2.1, 70\%: 1.5, 50\%: 1.0, 30\%: 0.2, 10\%: 0.1, 5\%: 0, 1\%: 0, 0\%: 0}.
\item C. We run the BM25 guidance following the GT sparse retrieval algorithm and apply to  SPLADE v2 weights. 
When a pivot document is selected for evaluation at each iteration of traversal based on the top $k$ threshold based on the learned weights, 
we examine if using a hybrid scoring based on $\beta w_B(d) + (1-\beta) w_L(d)$ would cause the loss of top $k$ results for that document.
The algorithm continues with the original flow without any change and the check is carried again in the next document selected.

GT Flow: {100\%: 4.6, 99\%: 2.0, 95\%: 1.2, 90\%: 0.6, 70\%: 0.1, 50\%: 0, 30\%: 0, 10\%: 0, 5\%: 0, 1\%: 0, 0\%: 0}.
\end{itemize}

}

\begin{figure*}[!htbp]
\begin{center}
  \includegraphics[width=2\columnwidth]{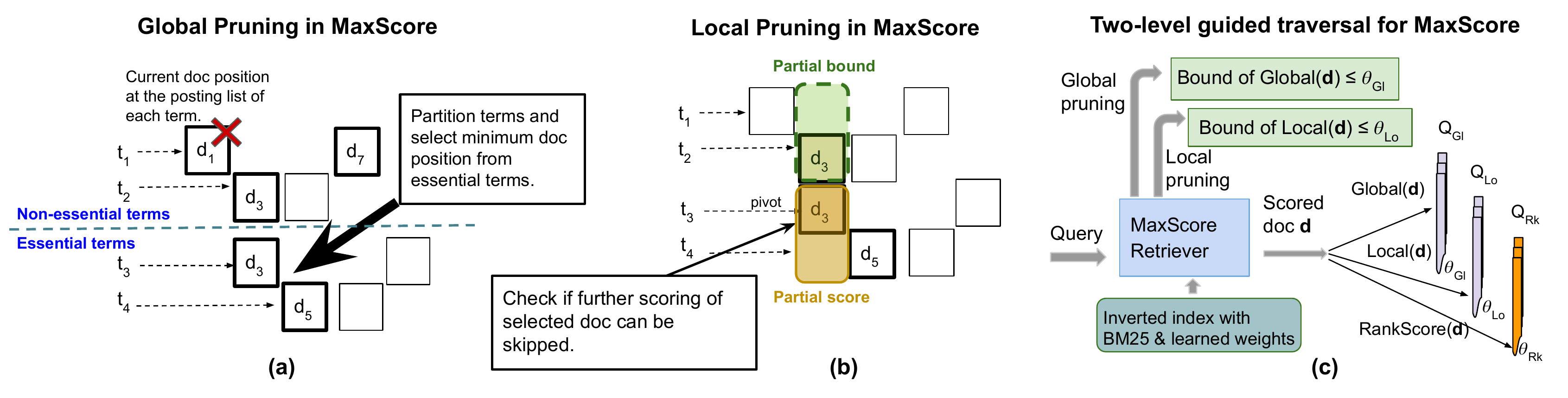}
\end{center}
  \caption{ (a) and (b): Example of two-level pruning in MaxScore. (c) Two-level guided traversal for MaxScore.}  
  \label{fig:maxscoreex}
\end{figure*}

\section{Two-level Guided Traversal}
\label{sect:maxscore}
\subsection{Two-level guidance for MaxScore}
\label{sub:local}

We assume the posting list of  each term is sorted in an increasing order of document IDs in the list.
The MaxScore algorithm~\cite{Turtle1995}   
can be viewed to conduct a sequence of traversal steps and  at each traversal step, it conducts term partitioning and then
examines if scoring of a selected document should be skipped.
We differentiate pruning-oriented actions in two levels as follows.
\comments{
\begin{figure}[htbp]
\begin{center}
  \includegraphics[width=\columnwidth]{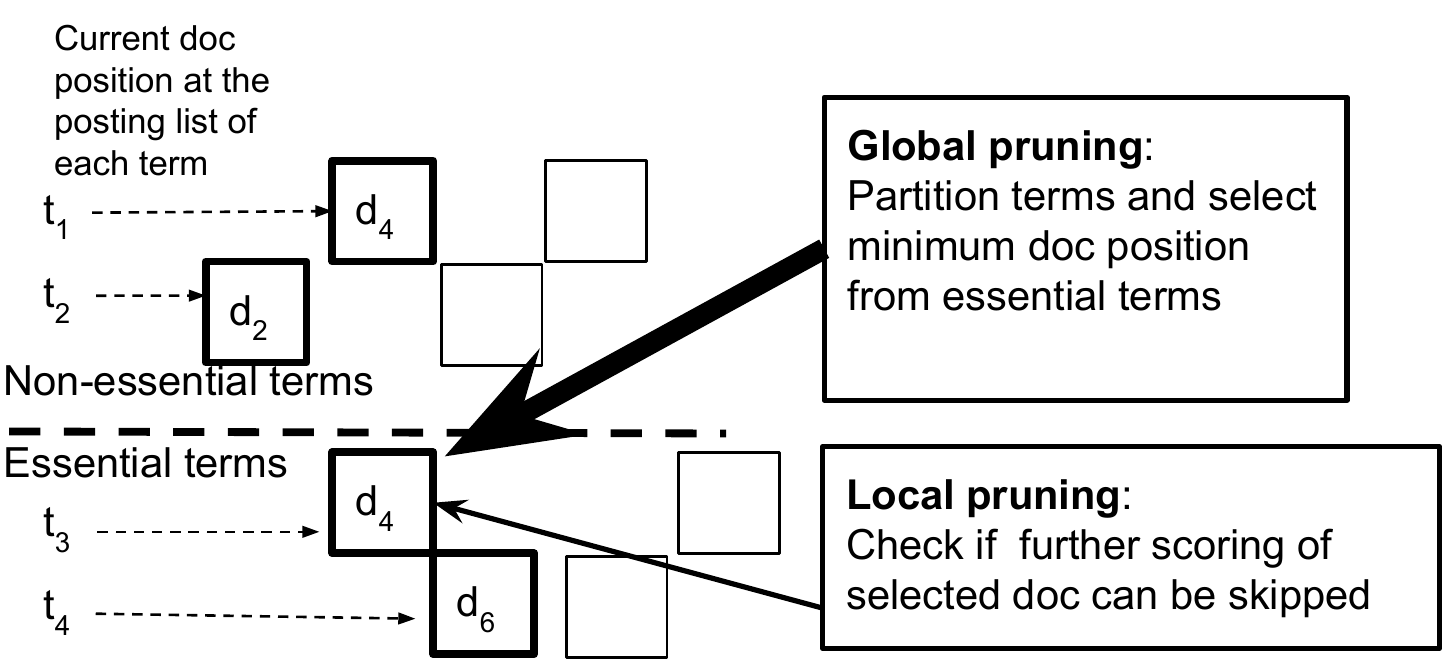}
\end{center}
  \caption{ Example of twol-level pruning in MaxScore}  
  \label{fig:maxscoreex}
\end{figure}
}

\begin{itemize}
[leftmargin=*]
\item {\bf Global level.} 
MaxScore
uses the maximum scores (upper bounds) of each term  and the  current known top $k$ threshold 
to partition terms into two lists at each index traversal step: the essential list and non-essential list.
The documents that do not contain essential terms are impossible to appear in top $k$ results and thus can be eliminated.
In the next step of index traversal, it will start with the minimum unvisited document ID
only from the posting lists of essential terms. 
Thus index visitation is driven by moving such a  minimum document ID pointer from the essential list. 

We consider  this level of pruning as  global  because  
it guides skipping of multiple documents and explores inter-document relationship implied by maximum term weights. 
Figure~\ref{fig:maxscoreex}(a) depicts an example of global pruning flow in MaxScore with 4 terms and each posting list maintains a 
pointer to the current  document being visited at a traversal step. The term partitioning identifies two essential terms  $t_3$ and $t_4$.
The minimum document ID among the current document pointers in these essential terms  
is $d_3$, and any document ID smaller than $d_3$ is skipped from further consideration  during this traversal step. 
The current visitation pointer of the posting list  of non-essential lists also moves to the smallest document ID
equal to or bigger than $d_3$.

\item {\bf Local level.} 
Once a document is selected for possible full evaluation, 
the ranking score upper bound of this document can be estimated and gradually tightened using 
maximum  weight contribution or the actual weight of each query term for this document.
This incrementally refined   score upper bound is compared against the dynamically updated top $k$ threshold, 
which provides another opportunity to fully or partially skip the evaluation of this document. 
We differentiate this level of skipping decision as local because this pruning is localized  towards a specific document selected. 
Figure~\ref{fig:maxscoreex}(b) illustrates an example of local pruning in MaxScore.
$d_3$ is the document selected after term partitioning and the maximum or actual weights contributed from all posting lists for document $d_3$
are utilized for the local pruning decision.

\end{itemize}

Instead of directly using BM25 to guide pruning at  the global and local levels,
we propose to use a linear combination of BM25  weights and learned weights to guide skipping at each level as follows, 
which allows a parameterizable control of their influence.

\begin{itemize}
[leftmargin=*]
\item  We incrementally maintain three accumulated scores for each document $Global(d)$, $Local(d)$, and  $RankScore(d)$. 
$Global(d)$ is for global pruning, $Local(d)$ is for local pruning, and $RankScore(d)$ is for final ranking.
\begin{align}
Global(d) &= \alpha RankScore_B(d)  +(1-\alpha) RankScore_L(d) \nonumber \\
Local(d)  &= \beta RankScore_B(d)  +(1-\beta) RankScore_L(d) \nonumber \\
RankScore(d)  &= \gamma RankScore_B(d)  +(1-\gamma) RankScore_L(d) \nonumber
\end{align}
where $0\leq \alpha, \beta, \gamma \leq 1$, 
$RankScore_B(d)$ follows Expression~\ref{eq:BM25} using BM25 weights, and 
$RankScore_L(d)$ follows Expression~\ref{eq:BM25} using learned weights. 
The RankScore formula follows the  GTI setting in \cite{mallia2022faster}, and 2GTI with $\alpha=\beta=1$  behaves like  GTI.  
2GTI with $\alpha=\beta=\gamma$ is the same as MaxScore retrieval and 
it uses learned neural weights only when $\gamma=0$.


\item With the above three scores for each evaluated document, we maintain three separate queues: 
$Q_{Gl}$,
$Q_{Lo}$,
$Q_{Rk}$ for documents with the $k$ largest scores in terms  of $Global(d)$, $Local(d)$, and $RankScore(d)$ respectively.
The lowest-scoring document in each queue is removed separately without inter-queue coordination. 
These queues  are maintained for different purposes: 
the first two queues regulate  global and local pruning  while the last queue is to produce the final top $k$ results. 
When a document based on local pruning is eliminated for further consideration, this document is not added to global and local queues $Q_{Gl}$ and $Q_{Lo}$. 
But this document may have some partial score accumulated for its $RankScore(d)$,  and it is still added to $Q_{Rk}$ in case this document with the partial
score may qualify in the top $k$ results based on the latest $RankScore(d)$ value.

These three queues yield three dynamic top-$k$ thresholds $\theta_{Gl}$, $\theta_{Lo}$, and $\theta_{Rk}$.
They can be used for a pruning decision to avoid any further  scoring effort  to obtain or refine $RankScore(d)$.
  


\comments{
\item  
The underlying retriever makes a decision to skip a selected document or not at the local level as folllows. 
\comments{
Let $Bound_B(d)$ be the estimated maximum rank score for document $d$ using BM25.
Let $Bound_L(d)$ be the estimated maximum rank score for $d$ using learned weights.
 A linear combination of these two estimated bounds will be used as the bound for skipping judgment: 
\[
Global(d, \alpha) = \alpha Bound_B(d)  +(1-\alpha) Bound_L(d) 
\]
where $0\leq \alpha \leq  1$.
A large $\alpha$ value such as 1  means the skipping condition is mainly based on BM25 weights,
while  a smaller $\alpha$ value means  skipping is mainly based on learned weights.

We have two options of making a skipping judgment where $F_s$ and $F_f$ are over-estimation factors:
}
If upperbound of $Local(d) \leq  \Theta_{Lo}$ 
then any further  computation effort to obtain or refine $RankScore(d)$  is eliminated.
}

\end{itemize}

\comments{
While BM25 scoring can be used to guide the above skipping as ~\cite{GT} to prune the index dynamically, we differentiate
the scoping of pruning in two levels:  Global and coaese grain level:  
two strategies to constraint the pruning infuldence by BM25..
the above pruning operations have different 
For the partition-based skipping based on essential and nonessential lists, the impact is large as the 
 a constrained guidance by considering 
A top-$k$ retrieval pruning algorithm can prune some document postings without fully evaluating the documents based on the top-k threshold. 
Some of the pruning are global, meaning a large portion of posting records can be skipped, 
while the others are local, that only a single or a little number of postings can be skipped. 
The global pruning happens rarely, but one single pruning have significant impact;
to compare with, one local pruning is subtle, but it is much more frequent, and accelerate the retrieval accumulatively.

Taking MaxScore~\cite{Turtle1995} algorithm for example, the pivot selection is global skipping, 
and the advancement of pointers in each posting is local skipping. We find that the global skipping occurs less 
(for MaxScore algorithm, the pivot partitioned the essential list and non-essential list, and will only be updated by the number of query words), 
but since its impact is massive, it can change the top-k list dramatically with GT; local skipping on the other hand is frequent, and 
won't affect the retrieved top-k list with GT.
}


\comments{
\begin{figure}[htbp]
\begin{center}
  \includegraphics[width=0.8\columnwidth]{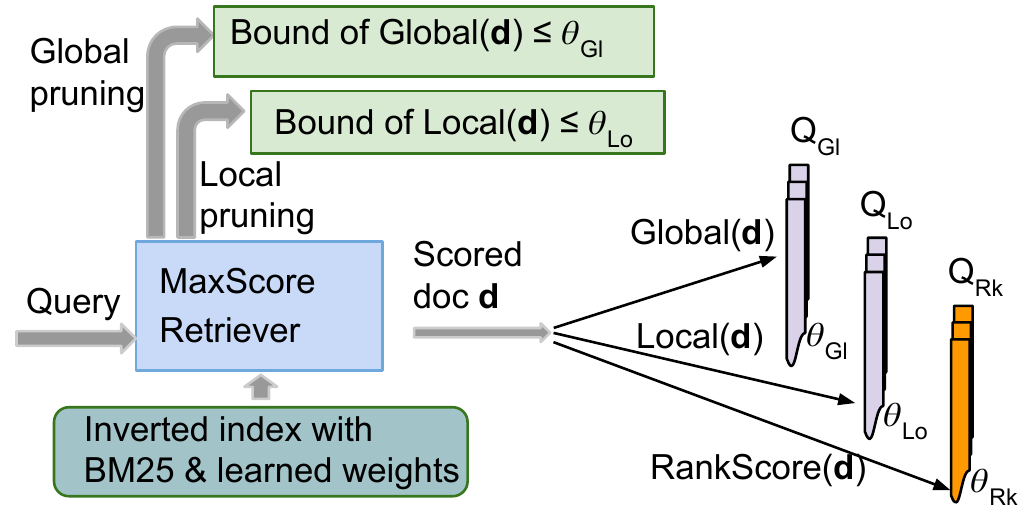}
\end{center}
  \caption{ Two-level guided traversal for MaxScore} 
  \label{fig:guidedskip}
\end{figure}

\begin{figure}[htbp]
\begin{center}
  \includegraphics[width=0.5\columnwidth]{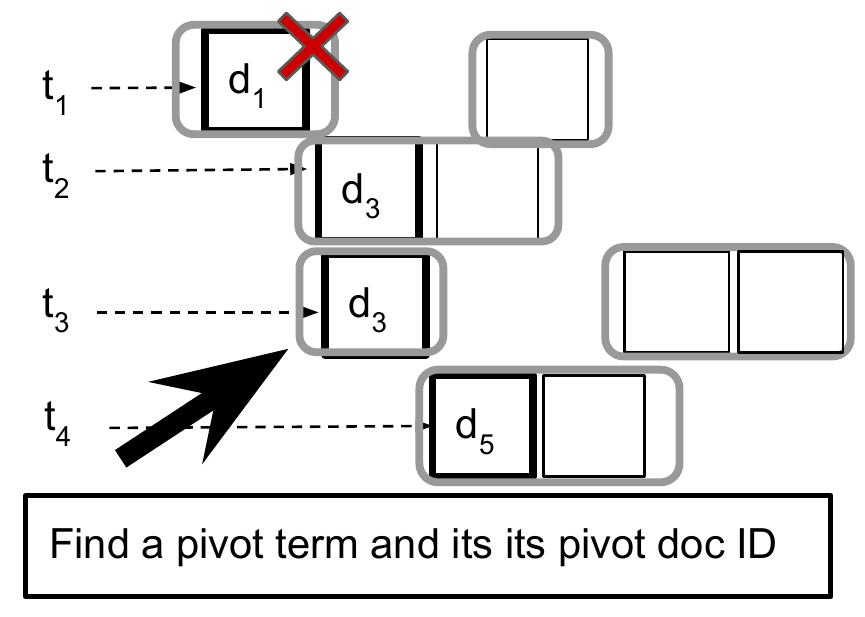}
\end{center}
  \caption{ Global pruning in BMW} 
  \label{fig:guidedskip}
\end{figure}

}


\comments{
\begin{itemize} 

\item We have two options of making a skipping judgment where $F_s$ and $F_f$ are over-estimation factors:
\begin{itemize} 
[leftmargin=*]
\item {\bf Single-threshold  skipping (ST)}.
  If $Bound(d, \alpha) < F_s \Theta_s$, then scoring of document $d$ is skipped.
\item {\bf Dual-threshold  skipping (DT)}.
If $Bound(d, \alpha) < F_s \Theta_s$ or  $Bound(d, \beta) < F_f \Theta_f$, then scoring of document $d$ is skipped.
\end{itemize}

\item When the detailed scoring of document $d$ is not skipped,
this document  is  added to both queues. 
One document is removed from each queue to maintain its size as $k$.
As shown in Figure~\ref{fig:maxscoreex},
let   document $x$ be the lowest scoring document in  $Q_s$ 
Let   document $y$ be the lowest scoring document in  $Q_f$.
If  $x = y$, we can just remove $x$ from both queues. 
When $x \neq y$, there are two options: 
\begin{itemize}
\item {\bf Independent view: }
The lowest-scoring document in each queue is removed separately without inter-queue coordination. 
This option allows different top-$k$ documents between $Q_s$ and $Q_f$ be maintained  
so  that $Q_s$ is more accurately matching the skipping condition regulated by $RankScore(x,\alpha)$ formula.
If removing document $x$ from $Q_s$ is a mistake because  its relevance is actually high based on the learned weights, 
since such a document is  still kept in $Q_f$, this document can still  appear in the final top-$k$ list.

\item {\bf Uniform view: } We remove $y$ from both queues, and in this way, two queues always contain the same  document sets.
This design option improves the pruning safeness.
Since  document $y$ will not appear in the final top-$k$ at the end,
keeping $y$ in $Q_s$ is unsafe. By removing $y$ from both queues makes two queues maintain a uniform  view of
what should be removed and kept.
\end{itemize}

\item At the end of retrieval, $Q_f$  outputs top-$k$ documents based on the combined  rank scores $RankScore(x,\beta)$. 

\end{itemize}

}



{\bf Revised MaxScore pruning control flow:}
Figure~\ref{fig:maxscoreex}(c)  illustrates the extra control flow added for the revised MaxScore algorithm.
Let $N$ be the number of query terms. We define:  
\begin{itemize}
[leftmargin=*]
\item 
Given $N$ posting lists corresponding to $N$ query terms,
each $i$-th posting list contains a sequence of posting records and each record contains document ID $d$, its BM25 weight $w_B(i,d)$ and learned weight $w_L(i,d)$.
Posting records are sorted in an increasing order of their document IDs.
\item An array $\sigma_L$ of $N$ where $\sigma_L[i]$ is the maximum contribution of the learned weight to any document for $i$-th term.
\item An array $\sigma_B$ of $N$ where $\sigma_B[i]$ is the maximum contribution of  the BM25 weight to any document for $i$-th term.
\item $N$ search terms are presorted so that
$\alpha \sigma_B[i] + (1-\alpha) \sigma_L[i] \leq \alpha \sigma_B[i+1] + (1-\alpha) \sigma_L[i+1] $  where $1\leq  i \leq N-1$.
\end{itemize}

{\bf Global pruning with term partitioning.}
For each query term  $1 \leq i\leq N$, we find the largest integer $pivot$ from 1 to $N$ so that 
$\sum_{j=1}^{pivot-1}  (\alpha  \sigma_B[j]+   (1-\alpha) \sigma_L[j])  \leq \theta_{Gl}$.
 All terms from $pivot$  to $N$ are considered as {\em  essential.} 
If a document $d$ does not contain any essential term, the upper bound of 
$Global(d) \leq 
\sum_{j=1}^{pivot-1}  \alpha  \sigma_B[j]+   (1-\alpha) \sigma_L[j]  \leq \theta_{Gl}$.
This  document cannot appear in the final top $k$ list based on the global score.
Then this document is skipped without appearing in any of the three queues.

Once the essential term list above the $pivot$ position is determined, let the next minimum document ID among the current position
pointers in  the posting lists of all essential terms  be  document $d$. We also call it the $pivot$ document. 

{\bf Local pruning.} Next  we check if  the detailed scoring of the selected pivot document $d$ can be avoided fully or partially. 
Following an implementation in \cite{tonellotto2018efficient}, we describe this procedure with a modification to use hybrid scoring  as follows
and it repeats  the following three steps with the initial value of  term position $x$ as the $pivot$ position and $x$ decreases by 1 at each loop iteration.
\begin{itemize}
[leftmargin=*]
\item Let  $PartialScore_{Local}(d)$ be the sum of all term weights of document $d$ in the posting lists from position $x$ to $N$ after linear combination. 
Namely $PartialScore_{Local}(d)= \sum_{i=x}^{N-1}   \beta  w_B(i,d) + (1-\beta)  w_L(i,d)$  when $i$-th posting list contains $d$,  and otherwise this value is 0. 

As $x$ decreases, the term weight of pivot document  $d$  is extracted from the posting list of $x$-th term if available. 
\item Let  $PartialBound_{Local}(d)$  be the bound for partial local score of document $d$ in the posting lists of the first to $x$-th query terms. 
\[
 PartialBound_{Local}(d) = \sum_{j=1}^{x}   \beta  \sigma_B[j]+   (1-\beta) \sigma_L[j].  
\]
\item At any time during the above calculation, if 
\[
PartialBound_{Local}(d) + PartialScore_{Local}(d)  \leq \theta_{Lo}, 
\]
further rank scoring for $pivot$ document $d$ is skipped and
this document will not  appear  in any of the three queues.
Figure~\ref{fig:maxscoreex}(b) depicts that the partial bound  and  partial score of $Local(d_3)$ for pivot document $d_3$
are computed to assist a pruning  decision.

\end{itemize}

\comments{
{\bf Top $k$ thresholding in $Q_{Rk}$ }. Finally when an unpruned document $d$ is added to Queue $Q_{Rk}$
this document may still be skipped when the real value or   the estimated upper bound of $RankScore(d)$  does 
not exceed  $\theta_{Rk}$. When this document is added $Q_{Rk}$, a non-top $k$ document may be kicked out and threshold
$\theta_{Rk}$ is updated. 

We may also  honor a second pruning condition that the upper bound of $RankScore(d)$  does not exceed  $\theta_{Rk}$ in this local level. Our evaluation finds 
that the benefit of deploying this second condition  is fairly small for the tested datasets and thus we treat this as an option. 
On the other hand, setting $\beta = \gamma$ does not yield the best performance as shown in our evaluation.  
}

{\bf Complexity.} 2GTI’s complexity is the same as MaxScore and GTI. 
The in-memory space cost includes  the space
to host the inverted index involved for this query and the three queues.  
The time complexity is proportional to the total number of posting records involved for a query
multiplied  by $\log k$ for queue updating. 
\comments{
$O( max( |Q| \times L, |D|)   \log k)$  
where $|Q|$ is the number of query tokens involved, $L$ is the average length of 
the posting list for each token, and $|D|$ is the number of documents in the index. 
}

\comments{
If Block-Max MaxScore is used, each The posting list is divided and compressed in a blockwise manner, and 
the  partial upperbound of $Local(d)$ can be further tightened using the block-wise maximum weight if a block of a  posting list contains this document $d$.
Since a previous study~\cite{2019ECIRMallia} indicates Block-Max MaxScore retrieval speed is actually slower than MaxScore under several compression schemes,
we leave this as a possible variation.
}
A posting list may be divided and compressed in a block-wise manner and  Block MaxScore can use 2GT similarly while 
a previous study~\cite{2019ECIRMallia} shows Block-Max MaxScore is actually slower than MaxScore under several compression schemes.
We will discuss the use of 2GT in block-based BMW in Appendix \ref{sect:bmw}.




\newtheorem{prop}{Proposition}
\subsection{Relevance properties of 2GTI}
\label{sect:property}

2GTI ensembles BM25 and learned weights for pruning in addition to rank score composition, 
producing a top $k$ ranked list which can be different than additive ranking with 
learned weights or their linear combination  of BM25 weights. 
Thus 2GTI  is not rank-safe compared to any of such baselines. Two-level pruning is driven by
different combination coefficients $\alpha$, $\beta$, and $\gamma$ configured in 2GTI 
and their value gap provides  an opportunity for additional  pruning while 2GTI tries to retain relevance effectiveness. 
Is there  a relevance guarantee   2GTI can offer  in case such pruning skips relevant documents erroneously sometimes?
To address this question analytically, this subsection 
presents three properties regarding the relevance outcome and competitiveness of  the 2GTI based retrieval.
 
Our analysis  will use the following terms. 
Given query $Q$, let $R_x$ be a ranked list of all documents of the given dataset
sorted  in a descend order of their rank scores
based on a linear combination of  their BM25 weights and learned weights with coefficient $x$, 
namely $\sum_{t \in Q} x* w_B(t, d) + (1-x) w_L(t,d)$ for document $d$.
Specifically, there are  three ranked lists: $R_\alpha$, $R_\beta$, and  $R_\gamma$. 
2GTI maintains 3 queues $Q_{Gl}$, $Q_{Lo}$, and $Q_{Rk}$ 
with 3 dynamically updated top $k$ thresholds, $\theta_{Gl}$, $\theta_{Lo}$, $\theta_{Rk}$.
Let $\Theta_{Gl}$, $\Theta_{Lo}$, $\Theta_{Rk}$
 be the final top $k$ threshold of these 3 queues at the end of  2GTI.
Namely it  is the rank score of $k$-th document in the corresponding queue.
The following fact is true:
\[
\theta_{Gl} \leq \Theta_{Gl}, \  
\theta_{Lo} \leq \Theta_{Lo},  \ 
\mbox{ and }
\theta_{Rk} \leq \Theta_{Rk}.
\] 

\begin{prop}
Assume the subset of top $k$ documents in each of $R_{\alpha}$,$R_\beta$, and $R_\gamma$ is unique
after arbitrarily swapping rank positions of documents with the same score.
Then any document that appears in top-$k$ positions of  $R_\alpha$, $R_\beta$, and $R_\gamma$ is in the top-$k$ outcome of 
2GTI.
\end{prop}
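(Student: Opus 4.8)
The plan is to show that a document $d$ lying in the top-$k$ of all three lists cannot be discarded by any of the three pruning mechanisms of 2GTI, namely global term partitioning, local bound pruning, and eviction from the output queue $Q_{Rk}$. First I would record a single monotonicity fact that drives every case. Let $g^*$, $l^*$, $r^*$ be the true $k$-th largest values of $Global(\cdot)$, $Local(\cdot)$, and $RankScore(\cdot)$ over the whole collection. Since each queue only ever receives documents that were actually scored, hence a subset of the collection, its final $k$-th ranked score cannot exceed the collection-wide $k$-th largest: $\Theta_{Gl}\le g^*$, $\Theta_{Lo}\le l^*$, and $\Theta_{Rk}\le r^*$. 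Together with the given inequalities $\theta_{Gl}\le\Theta_{Gl}$, $\theta_{Lo}\le\Theta_{Lo}$, $\theta_{Rk}\le\Theta_{Rk}$, every running threshold stays at or below the corresponding collection-wide $k$-th score throughout the traversal. Under the uniqueness hypothesis, the top-$k$ set of $R_\alpha$ is exactly $\{e:Global(e)\ge g^*\}$ and has cardinality $k$, and similarly for $R_\beta$ and $R_\gamma$ with $l^*$ and $r^*$.

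For the two pruning levels I would use one common template. A global prune of $d$ requires its $Global$ over-estimate, the non-essential prefix sum $\sum_{j=1}^{pivot-1}(\alpha\sigma_B[j]+(1-\alpha)\sigma_L[j])$, to be $\le\theta_{Gl}$; since this quantity dominates $Global(d)$, a prune would force the chain $Global(d)\le\theta_{Gl}\le\Theta_{Gl}\le g^*\le Global(d)$ to collapse to equalities, in particular $\theta_{Gl}=g^*$ at that step. But $\theta_{Gl}=g^*$ means $Q_{Gl}$ already holds $k$ documents of $Global$-score at least $g^*$, which by uniqueness are precisely the top-$k$ of $R_\alpha$ and so include $d$; this contradicts $d$ being pruned and therefore absent from the queue. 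The same template, now with the local over-estimate $PartialBound_{Local}(d)+PartialScore_{Local}(d)\ge Local(d)$, the threshold $\theta_{Lo}$, the queue $Q_{Lo}$, and $d\in$ top-$k$ of $R_\beta$, shows $d$ is never locally pruned.

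Surviving both levels, $d$ is fully scored and inserted into $Q_{Rk}$ with its exact value $RankScore(d)\ge r^*\ge\Theta_{Rk}$, so I only need to block later eviction. Here I would use that any document resident in $Q_{Rk}$ carries a recorded score no larger than its true $RankScore$, because partial scores can only under-count; hence a document can outrank $d$ in $Q_{Rk}$ only if its true $RankScore$ exceeds $RankScore(d)$, and by uniqueness there are at most $k-1$ such documents, too few to evict $d$ from the $k$ best slots. The boundary case $RankScore(d)=\Theta_{Rk}$ is closed as before: equality forces $\Theta_{Rk}=r^*$, the $k$ documents attaining it have true score at least $r^*$ and thus form exactly the top-$k$ of $R_\gamma$, which contains $d$. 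Hence $d$ remains in the final $Q_{Rk}$, i.e. in the 2GTI output.

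I expect the genuine difficulty to lie entirely in the boundary cases where a score meets a threshold exactly: the pruning tests are non-strict ($\le\theta$), so a plain comparison leaves such ties unresolved, and it is exactly here that the uniqueness hypothesis is indispensable. The delicate step is the recurring claim that, at the precise moment of a would-be prune or eviction, equality of the running threshold with the collection-wide $k$-th score forces the relevant queue to already contain the entire top-$k$ set of that list — and in particular $d$ — which is incompatible with $d$ being the document currently discarded.
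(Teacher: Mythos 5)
Your proof is correct and takes essentially the same route as the paper's: assume $d$ is pruned or rejected, collapse the inequality chain (document score $\ge$ collection-wide $k$-th score $\ge$ final queue threshold $\ge$ running threshold) into equalities, and invoke the uniqueness assumption to reach a contradiction. The only differences are cosmetic: where the paper exhibits a tie between the $k$-th and $(k+1)$-th documents of the relevant ranked list, you instead show the corresponding queue would already have to contain $d$, and you spell out the $Q_{Rk}$ eviction and partial-score under-counting details that the paper leaves implicit.
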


\begin{proof}
For any document $d$ that appears in the top $k$ positions of all three ranked lists, 
$
Global(d) \ge \Theta_{Gl} \ge \theta_{Gl}$, $Local(d) \ge \Theta_{Lo} \ge \theta_{Lo}$  and $ RankScore(d) \ge \Theta_{Rk} \ge \theta_{Rk}.$

\comments{
The thresholds of the 3 queues maintained by 2GTI get updated in every index traversal step.
Before  the top $k$-th document in each of  $R_\alpha$, $R_\beta$, and $R_\gamma$ appears in the corresponding queue, 
every iteration of 2GTI satisifies 
\begin{equation*}
\begin{aligned}
\Theta_{Gl} > \theta_{Gl}, 
\Theta_{Lo} > \theta_{Lo}, 
\mbox{ and }  
\Theta_{Rk} > \theta_{Rk}. 
\end{aligned}
\end{equation*}
This is because   based on the assumption of this proposition, 
the rank score of the  top $k$-th document in
each of $R_\alpha$, $R_\beta$, and $R_\gamma$ exceeds that  of top $k+1$-th document in each of them.
Then
\[
Global(d) > \theta_{Gl}, \  Local(d) >  \theta_{Lo}, \mbox{ and }  RankScore(d) > \theta_{Rk}.
\]
}
If document $d$ is eliminated by global pruning during 2GTI retrieval, $Global(d) = \Theta_{Gl} = \theta_{Gl}$
and   
the  $R_{\alpha}$-based rank score 
of  both document $d$ and $(k+1)$-th document in ranked list $R_{\alpha}$ has to be  $\Theta_{Gl}$.  
Then the subset of top $k$ documents in  $R_{\alpha}$ is not unique after arbitrarily 
swapping rank positions of documents with the same score, which is a contradiction.

With the same reason, we can argue  that document $d$ cannot be  eliminated by 
local pruning  or rejected by $\theta_{Rk}$ when being added to $Q_{Rk}$
during 2GTI retrieval. 
Then this document has to appear in the final outcome of 2GTI.
\end{proof}
The following two propositions analyze when  2GTI performs better in relevance than
a two-stage search algorithm called  $R2_{\alpha,\gamma}$
which fetches top $k$ results from  list $R_\alpha$, and then
re-ranks using the scoring formula of $R_\gamma$.
 
\begin{prop}
\label{prop2}
Assume the subset of top $k$ documents in each of $R_{\alpha}$,$R_\beta$, and $R_\gamma$ is unique
after arbitrarily swapping rank positions of documents with the same score.
If 2GTI is configured with $\alpha=\beta$ or $\beta=\gamma$, 
the average $R_\gamma$-based rank score of the top $k$ documents produced 
by 2GTI is no less than that of two-stage algorithm
$R2_{\alpha,\gamma}$.
\end{prop}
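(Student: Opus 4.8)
The plan is to reduce the statement to an elementary subset-sum comparison. Observe that the two-stage algorithm $R2_{\alpha,\gamma}$ outputs exactly the top-$k$ set of $R_\alpha$, call it $A$ (re-ranking by $s_\gamma$ permutes $A$ but does not change the set), where I abbreviate the $R_x$-based rank score as $s_x(d)=x\,RankScore_B(d)+(1-x)RankScore_L(d)$. Letting $G$ denote the 2GTI output, the claim is equivalent to $\sum_{d\in G}s_\gamma(d)\ge\sum_{d\in A}s_\gamma(d)$ since both sets have size $k$. I would prove this by exhibiting a \emph{candidate set} $E$ of documents with two properties: (P1) $A\subseteq E$, and (P2) $G$ is precisely the $k$ documents of $E$ with the largest $s_\gamma$ value. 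Given (P1) and (P2) the inequality is immediate, because the top-$k$-by-$s_\gamma$ subset of $E$ maximizes the $s_\gamma$-sum over all $k$-subsets of $E$, and $A$ is one such $k$-subset.

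The heart of the argument is a unified way to read off $E$ from the pruning levels, exploiting the fact that the levels using a given coefficient are rank-safe for the corresponding combined score. Property (P1) will follow because the pruning level(s) governed by $s_\alpha$ never eliminate a genuine top-$k$-by-$s_\alpha$ document: a pruned document has an $s_\alpha$ upper bound no larger than the current threshold $\theta\le\Theta$, whereas every document of $T_\alpha=A$ has true score at least the final threshold, so (using uniqueness to avoid boundary ties) no document of $A$ is ever dropped. Property (P2) will follow because the placement into $Q_{Rk}$ is by $s_\gamma=RankScore$, and within $E$ this selection is rank-safe for $s_\gamma$, so $Q_{Rk}$ retains exactly the $k$ largest-$s_\gamma$ documents of $E$.

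The two configurations only differ in which level supplies which guarantee. When $\alpha=\beta$, we have $Global(d)=Local(d)=s_\alpha(d)$, so both global and local pruning are driven by $s_\alpha$; I would take $E$ to be the set of fully evaluated documents. Both levels are rank-safe for $s_\alpha$, giving (P1), and since $\gamma$ only enters the final ranking, $Q_{Rk}$ keeps the top-$k$ by $s_\gamma$ among $E$, giving (P2). When $\beta=\gamma$, we have $Local(d)=RankScore(d)=s_\gamma(d)$ while global pruning still uses $s_\alpha$; here I take $E$ to be the documents surviving global pruning. Global rank-safety for $s_\alpha$ gives (P1), and local pruning together with $Q_{Rk}$ selection—both driven by $s_\gamma$ and sharing $\theta_{Lo}=\theta_{Rk}$—are rank-safe for $s_\gamma$ over the globally surviving documents, so a locally pruned document has $s_\gamma$ upper bound $\le\theta_{Lo}$ and cannot enter the final top-$k$; hence $G$ is the top-$k$ by $s_\gamma$ within $E$, giving (P2). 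As a consistency check, Proposition~1 with $T_\beta=T_\gamma$ (resp. $T_\alpha=T_\beta$) yields $T_\alpha\cap T_\gamma\subseteq G$, so $A\setminus G\subseteq A\setminus T_\gamma$ consists only of low-$s_\gamma$ documents, which is exactly the harmless case.

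The main obstacle is establishing (P2) rigorously, i.e.\ that $G$ is \emph{exactly} the top-$k$ by $s_\gamma$ within $E$. This requires transporting the standard MaxScore rank-safety argument to the combined scores while correctly tracking three queues whose thresholds only increase ($\theta\le\Theta$) and interact; the delicate point in the $\beta=\gamma$ case is that local pruning, being safe only for $s_\gamma$, may legitimately discard documents of $A$ that happen to have small $s_\gamma$, and I must confirm this never removes a document that is top-$k$-by-$s_\gamma$ among the globally surviving set. Once the two rank-safety facts are in place, the concluding inequality is a one-line subset-sum comparison.
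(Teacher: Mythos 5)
Your overall reduction (comparing $\sum_{d\in G}s_\gamma(d)$ with $\sum_{d\in A}s_\gamma(d)$ over two size-$k$ sets) and your property (P1) match the paper's proof, but your key lemma (P2) fails for the $\alpha=\beta$ configuration under the algorithm as the paper specifies it. 2GTI does not insert only fully evaluated documents into $Q_{Rk}$: when a pivot document is eliminated by local pruning, the paper's queue-maintenance rule says it ``is still added to $Q_{Rk}$'' with whatever partial $RankScore(d)$ it has accumulated, precisely because that partial score ``may qualify in the top $k$ results.'' When $\alpha=\beta\neq\gamma$ (e.g., $\alpha=\beta=1$, $\gamma=0.05$), local pruning is driven by the $s_\beta$ upper bound while the inserted partial score is a $\gamma$-combination; a document with a low BM25 bound but large accumulated learned weights can therefore be locally pruned yet enter---and remain in---the final top $k$ of $Q_{Rk}$, beating the full $s_\gamma$ scores of fully evaluated documents. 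Then $G\not\subseteq E$ (your $E$ being the fully evaluated documents), $G$ is not the top-$k$-by-$s_\gamma$ subset of $E$, and your concluding step (``$G$ maximizes the $s_\gamma$-sum over all $k$-subsets of $E$, and $A$ is one such subset'') has no footing. A related inexactness appears in your $\beta=\gamma$ case: partial-score insertions make $\theta_{Rk}$ possibly larger than $\theta_{Lo}$, so $\theta_{Lo}=\theta_{Rk}$ is not guaranteed, though there only the direction $\theta_{Lo}\leq\theta_{Rk}$ is needed and your argument survives.

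The repair---and this is exactly the paper's route---is to drop the exact top-$k$-within-$E$ characterization and argue only with the final threshold $\Theta_{Rk}$: (i) every document in the final $Q_{Rk}$ has stored score at least $\Theta_{Rk}$, hence true $s_\gamma$ at least $\Theta_{Rk}$, since partial scores underestimate; (ii) every $d\in A\setminus Q_{Rk}$ satisfies $s_\gamma(d)\leq\Theta_{Rk}$, because $d$ cannot be globally pruned (your P1), and it is either rejected or evicted by $\theta_{Rk}$, or else locally pruned---which, by the uniqueness assumption, is impossible when $\alpha=\beta$ and hence forces the configuration $\beta=\gamma$, in which case $s_\gamma(d)=Local(d)\leq\theta_{Lo}\leq\theta_{Rk}\leq\Theta_{Rk}$. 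Since $|A\setminus Q_{Rk}|=|Q_{Rk}\setminus A|$, replacing each missing document by a surviving one and bounding both through $\Theta_{Rk}$ yields $\sum_{d\in A}s_\gamma(d)\leq\sum_{d\in G}s_\gamma(d)$. This exchange argument needs neither $G\subseteq E$ nor any exact rank-safety statement, which is why it tolerates the partial-score insertions that break (P2). (The paper does contain one contradictory sentence claiming a locally pruned pivot ``will not appear in any of the three queues''; under that reading your proof would go through, but the queue rule quoted above is the stated design, and your argument does not survive it.)
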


\begin{proof}

We let $R2[k]$ denote the top $k$ document subset  in the outcome of $R2_{\alpha,\gamma}$.
To prove this proposition, we compare the average $R_\gamma$-based rank score
of documents in $R2[k]$ and that in $Q_{Rk}$ at the end of 2GTI.
Notice that 
for any document $d$ satisfying $d\in R2[k]$, 
it is in  the top $k$ results of  ranked list $R_\alpha$ and  this top $k$ subset is deterministic
based on the assumption of this proposition. Then  $d$ cannot be  eliminated by global pruning in 2GTI. 

Given  any document $d$ satisfying $d\in R2[k]$ and  $d\not \in Q_{Rk}$ at the end of 2GTI,
it is either eliminated by local pruning with threshold $\Theta_{Lo}$
or   by top $k$ thresholding of Queue $Q_{Rk}$ with threshold $\theta_{Rk}$.
In the later case, $RankScore(d) \leq \theta_{Rk} \leq \Theta_{Rk}$.
When $d$ is eliminated by local pruning, global pruning has to use a different formula because $d$ is not eliminated by global pruning, 
and then 2GTI has to be configured with $\beta=\gamma$ instead of $\alpha=\beta$.
In that case local pruning is identical to elimination with top $k$ threshold of $Q_{Rk}$.
Then $RankScore(d) \leq \theta_{Rk} \leq \Theta_{Rk}$.

\comments{
If 2GTI is configured as $\alpha=\beta$, $Q_{Gl}=Q_{Lo}$, local pruning uses the same formula as global pruning,
and since  $d$ is not eliminated by global pruning, and then it has to be eliminated by 
top $k$ thresholding of Queue $Q_R$ with threshold $\Theta_{R}$.
If 2GTI is configured as $\beta=\gamma$, $Q_{Gl}=Q_{Lo}$, local pruning uses the same formula as global pruning,
because its score upperbound does not exceed $\Theta_R$ at some retrieval iteration of 2GTI.  
$d$ will appear in $Q_{Gl}$ at the end of 2GTI, not eliminated by global pruning. 
It does not stay in Queue $Q_{R}$ at the end 
At that time  $RankScore(d) \leq \Theta_{R} \leq \Theta_\gamma$.
}
Since the size of both $R2[k]$  and $Q_{Rk}$ is $k$,
$|R2[k] - R2[k]\cap  Q_{Rk}|=$ $|Q_{Rk} - R2[k]\cap  Q_{Rk}|$. We can derive:
\begin{equation*}
\begin{aligned}
&\sum_{d \in R2[k] } RankScore(d) \\
=& \sum_{d \in R2[k]  \cap  Q_{Rk}} RankScore(d)
+ \sum_{d \in  R2[k], d \not\in Q_{Rk}} RankScore(d)\\
\leq  & 
 \sum_{d \in R2[k]  \cap  Q_{Rk}} RankScore(d)
+ \sum_{d \in R2[k], d \not\in Q_{Rk}} \Theta_{Rk}\\
=&  \sum_{d \in R2[k]  \cap  Q_{Rk}} RankScore(d)
+ \sum_{d \not\in R2[k], d \in Q_{Rk}} \Theta_{Rk}\\
\leq &  \sum_{d \in R2[k]  \cap  Q_{Rk}} RankScore(d)
+ \sum_{d \not\in R2[k], d \in Q_{Rk}}  RankScore(d)\\
=& \sum_{d \in Q_{Rk}} RankScore(d).\\
\end{aligned}
\end{equation*}

Thus
\[\frac{1}{k}
\sum_{d \in R2[k] } RankScore(d)
\leq  \frac{1}{k}
 \sum_{d \in Q_{Rk}} RankScore(d).
\]

\end{proof}

{\bf Definition 1.} 
For a dataset in which documents are only labeled relevant or irrelevant 
for any test query, we call  ranked list $R_x$ {\em outmatches} $R_y$ 
if whenever $R_y$ orders a pair of relevant and irrelevant documents correctly for a query, $R_x$ also orders them correctly. 

\begin{prop}
Assume documents in a dataset are only labeled as relevant or irrelevant for a test query.
Given a query, when  $R_\gamma$ outmatches $R_\beta$, which outmatches $R_\alpha$,
2GTI retrieves equal or more relevant documents in top-$k$ positions 
than two-stage algorithm $R2_{\alpha, \gamma}$.
\end{prop}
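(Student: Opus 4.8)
The plan is to compare the two output \emph{sets} directly rather than reasoning about ranked positions. Since $R2_{\alpha,\gamma}$ merely re-orders the top-$k$ of $R_\alpha$ by the $R_\gamma$ formula, its output set is exactly the top-$k$ of $R_\alpha$; call it $A$, so the number of relevant documents $R2_{\alpha,\gamma}$ returns equals the number of relevant documents in $A$. Writing $T=Q_{Rk}$ for the 2GTI output, I would lower-bound the relevant count of $T$ by that of $A$ through an exchange over the symmetric difference: the equal-sized ``leaving'' set $A\setminus T$ and ``entering'' set $T\setminus A$. It then suffices to show that the entering set contains at least as many relevant documents as the leaving set.

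The argument rests on two dominance facts. First, for $R_\alpha$: every leaving document lies in $A$ (the top-$k$ of $R_\alpha$) while every entering document does not, so each leaving document has an $R_\alpha$ score at least as large as each entering document. Second, for $R_\gamma$: each entering document is in $Q_{Rk}$ and hence has $RankScore$ (its $R_\gamma$ score) at least $\Theta_{Rk}$, while each leaving document was dropped by 2GTI. Reusing the pruning analysis behind Proposition~\ref{prop2}, a document of $A$ is never removed by global pruning, so if it is dropped it is either rejected at the $Q_{Rk}$ threshold or locally pruned, and in either case its $RankScore$ is at most $\Theta_{Rk}$; thus each entering document $R_\gamma$-dominates each leaving document.

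With both facts I would close the exchange using the contrapositive of the outmatches definition. Take any leaving $d$ and entering $e$: then $R_\alpha$ orders $d$ above $e$ while $R_\gamma$ orders $e$ above $d$. If $d$ were relevant and $e$ irrelevant, $R_\alpha$ would order this relevant--irrelevant pair correctly, so since $R_\gamma$ outmatches $R_\alpha$ (transitivity along the chain $R_\gamma$ outmatches $R_\beta$ outmatches $R_\alpha$), $R_\gamma$ would order it correctly too, placing $d$ above $e$ --- contradicting the $R_\gamma$-dominance of $e$. Hence no relevant-leaving/irrelevant-entering pair can coexist, which forces the number of relevant documents among the entering set to be at least that among the leaving set; adding the relevant documents common to $A$ and $T$ gives relevant$(T)\ge$ relevant$(A)$, the claim. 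This same pairwise trick, stated for whole lists, is exactly the monotonicity principle ``if $R_x$ outmatches $R_y$ then the top-$k$ of $R_x$ has at least as many relevant documents as the top-$k$ of $R_y$,'' provable by counting mis-ordered pairs: a relevant document present in the top-$k$ of $R_y$ but absent from that of $R_x$ would sit below all $k-b$ irrelevant documents of $R_x$'s top-$k$, which by the contrapositive must then all precede it in $R_y$ as well, contradicting the irrelevant budget $k-a$ of $R_y$'s top-$k$.

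The main obstacle is the second dominance fact for \emph{locally} pruned documents, because local pruning is governed by $R_\beta$ rather than by $R_\gamma$, so a dropped document of $A$ need not have small $RankScore$ a priori. This is precisely where the intermediate list $R_\beta$ and the full chain are needed, and it must be controlled as in Proposition~\ref{prop2}, i.e. in the configurations where local pruning coincides with global pruning or with the final thresholding, so that a locally pruned document of $A$ indeed has $RankScore\le\Theta_{Rk}$. I would also invoke the uniqueness-after-swaps assumption carried through the earlier propositions to dispose of ties at the thresholds $\Theta_{Gl}$ and $\Theta_{Rk}$, since both dominance facts and the closing contradiction rely on the relevant comparisons being strict at the boundary.
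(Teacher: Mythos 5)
Your exchange argument has a genuine gap, and you have in fact located it yourself: the second dominance fact fails for locally pruned documents, and your proposed repair --- invoking the configurations of Proposition~\ref{prop2}, where local pruning coincides with global pruning ($\alpha=\beta$) or with the final thresholding ($\beta=\gamma$) --- proves a different, weaker statement. The present proposition carries no such configuration hypothesis; it is meant to hold for arbitrary $(\alpha,\beta,\gamma)$, with the configuration assumption of Proposition~\ref{prop2} \emph{replaced} by the outmatching chain. For general configurations, a document $d$ in the top-$k$ of $R_\alpha$ may be locally pruned because its $\beta$-combination upper bound falls below $\theta_{Lo}$; its full $\gamma$-score is then never computed (at most a partial score reaches $Q_{Rk}$), and nothing bounds $RankScore(d)$ by $\Theta_{Rk}$. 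Hence the leaving set $A\setminus T$ can contain documents that $R_\gamma$-dominate entering documents, and your closing contradiction never gets off the ground.

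The missing idea is that the comparison must be chained through the intermediate queue $Q_{Lo}$, using the two outmatching relations separately rather than their transitive composite. This is exactly what the paper does: it counts the relevant documents $c_\alpha$, $c_\beta$, $c_\gamma$ in the top $k$ of $R_\alpha$, of $Q_{Lo}$, and of $Q_{Rk}$, and proves $c_\alpha \le c_\beta \le c_\gamma$ by a positional swapping argument. The first inequality works because any failure of a non-globally-pruned document to sit in the final $Q_{Lo}$ is a $\beta$-score event (both local pruning and eviction from $Q_{Lo}$ are governed by the $\beta$-combination against $\theta_{Lo}$), so the relation ``$R_\beta$ outmatches $R_\alpha$'' converts the resulting order reversal into a label constraint. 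The second inequality works because documents in the final $Q_{Lo}$ were fully scored and added to $Q_{Rk}$, so their absence from the final $Q_{Rk}$ \emph{is} a $\gamma$-score event bounded by $\Theta_{Rk}$, where ``$R_\gamma$ outmatches $R_\beta$'' applies. By collapsing the hypothesis to ``$R_\gamma$ outmatches $R_\alpha$'' via transitivity, you discard precisely the link ($R_\beta$) that controls local pruning, which is why your direct $A$-versus-$T$ exchange cannot be completed without importing the foreign hypothesis of Proposition~\ref{prop2}.
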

\comments{
\begin{enumerate}
    \item Any document that appears in all top-$k$ lists of $R_\alpha$, $R_\beta$, and $R_\gamma$ is in the top-$k$ outcome of 2GTI. 
    \item Assume each document is labeled as relevant or irrelevant. Given a query, define that ranking algorithm $B$ outperforms $A$ if $A$ orders any pair of relevant and irrelevant documents correctly, $B$ also orders them correctly. When Algorithm $R_\gamma$ outperforms $R_\beta$, which outperforms $R_\alpha$, 2GTI has equal or more relevant documents in top-$k_1$ positions, for any $k_1\leq k$, than a two-stage algorithm with $R_\gamma$ as a re-ranker after top-$k$ $R_\alpha$-based retrieval.
    \item 2GTI($\beta=\gamma$) has the average $R_\gamma$-based rank score of its top $k$ documents no less than that of a two-stage algorithm with $R_\gamma$ as a re-ranker after top-$k$ $R_\alpha$-based retrieval. 
\end{enumerate}
}

\begin{proof}


When 2GTI completes its retrieval for a query,
we count the  number of relevant  documents in top $k$ positions
of 	
list $R_\alpha$,  
queue $Q_{Lo}$, and queue $Q_{Rk}$ as $c_\alpha, c_\beta$, and $c_\gamma$, 
respectively.  To show $c_{\alpha} \leq c_{\beta}$,
we initialize them as  0 first and run the following loop to compute $c_\alpha$ and $c_{\beta}$ iteratively. 
The loop  index variable  $i$ varies from $k$, $k-1$, until $1$, and at each iteration
we look at document $x$ at Position $i$ of $R_\alpha$, 
and  document $y$ at Position $i$ of $Q_{Lo}$.
Let $L_x$ and $L_y$ be their binary label by which value 1 means  relevant and 0 means irrelevant.
\begin{itemize}
\item If $L_x=L_y$, we add $L_x$ to both $c_\alpha$ and $c_\beta$. Continue this loop.
\item Now $L_x \neq L_y$.
If $L_x=0$, $L_y=1$,  we add $1$ to $c_\beta$, and continue the loop. 
If $L_x=1$, $L_y=0$, there are two cases:
	\begin{itemize}
\item If $x$ is within top $i$ positions of current $Q_{Lo}$, 
we add $1$ to both $c_\alpha$ and $c_\beta$. Swap the positions of documents $x$ and $y$ in $Q_{Lo}$. Continue the loop.
\item If $x$ is not within top $i$ positions  of $Q_{Lo}$, since $x$ is in the top 
$k$ of $R_{\alpha}$, it cannot be globally pruned and  it will be evaluated by 2GTI for a possibility of entering $Q_{Lo}$.
If $x$ is ranked before $y$ in list $R_\alpha$, and 
since $R_\beta$ outmatches $R_\alpha$, $x$ has to be ranked before $y$ in both $R_\beta$ and  $Q_{Lo}$. That is a contradiction.
If $x$ is ranked after $y$ in $R_\alpha$,  we swap the positions of  $x$ and $y$ in $R_{\alpha}$. Continue the loop.
	\end{itemize}

\end{itemize}
The above process repeats and moves to a higher position until $i=1$. 
When $i=1$,
with top-1 document $x$ in 
$R_{\alpha}$   
and top-1 $y$ in $Q_{Lo}$,  the only possible cases are $L_x=L_y$ or $L_x=0$ and $L_y=1$.
Therefore, at the end of the above process, $c_\beta \geq c_\alpha$.

Similarly, we can verify that
$c_\gamma \geq c_\beta$ since $R_\gamma$ outmatches $R_\beta$.
Therefore  $c_\gamma \geq c_\beta \geq c_\alpha$.
 The number of relevant documents up to position $k$ retrieved  for 2GTI is $c_\gamma$
 while the number of relevant documents up to position $k$ retrieved  for $R2_{\alpha, \gamma}$
is $c_\alpha$. Thus this proposition is true.
\end{proof}

The above analysis   indicates  that the top  documents agreed by three rankings $R_{\alpha}$, 
$R_\beta$, and $R_\gamma$ are  always kept on the top by 2GTI,
and a properly  configured 2GTI algorithm could outperform  a two-stage retrieval and re-ranking 
algorithm in relevance, especially  
when ranking $R_{\gamma}$ outmatches $R_{\beta}$ and $R_{\beta}$  outmatches $R_{\alpha}$ for a query. 
Since two-stage search  with neural re-ranking conducted after BM25 retrieval is well adopted  in the literature,
this analysis provides useful insight  into the  ``worst-case'' relevance competitiveness of 2GTI with two-level pruning. 
GTI can be considered as 
 a special case of 2GTI with $\alpha=\beta=1$ when the same index is used, and the above three propositions are true for GTI. 
2GTI provides more flexibility in pruning with quality control than GTI and Section~\ref{sect:evaluation} further
evaluates their relevance difference.

\comments{

We only know #relevant results in R_alpha <= #relevants in R_beta, <=#relevant results in R_gamma

Note for the future discussion
We define
R_y strongly outperforms R_x,
when
R_y outperforms R_x, and also for any relevant document found in top $k$ of R_x, 
it will also  appear in top k of R_y.
Then we can prove that
assume documents in a dataset are only labeled as relevant or irrelevant for a test query,
and  $R_\gamma$ strongly outperforms $R_\beta$, which strongly  outperforms $R_\alpha$.
For any $k_1\leq k$, 
2GTI retrieves equal or more relevant documents within top-$k_1$ positions 
than two-stage algorithm $R2_{\alpha, \gamma}$.

}

\comments{
The following two sub-sections solve these problems respectively. In section \ref{sub:local}, we propose a more fine-grained guidance method maintains the same level of relevance as the learned sparse representations, with little extra time cost compared with GT. In section \ref{sub:interpolation}, we introduce two techniques to modify those zero entries in the posting lists to keep the effectiveness of BM25 guided skipping.

}

\subsection{Alignment of tokens and weights}
\label{sect:align}

The BM25 model is usually built on word-level tokenization on the original or expanded document sets and the popular expansion method uses DocT5Query 
with the same tokenization method. When a learned representation uses a different tokenization method such as BERT's WordPiece based on subwords from BERT vocabulary,
we need to align it with BM25 for a consistent term reference.
For example, when using BM25 to guide the traversal of SPLADE index,
the WordPiece tokenizer is used for a document expanded with DocT5Query before BM25 weighting is applied to each token.
Once tokens are aligned, from the index point of view, the same token has two different posting lists based on BM25 weights and based on SPLADE.
To merge them when postings do not align one-to-one, the missing weight is set to zero as proposed in \cite{mallia2022faster}.
We call this zero-filling  alignment.  As alternatives, we propose two more methods to fill missing weights  with better weight smoothness.

\begin{itemize}
[leftmargin=*]
\item {\bf One-filling alignment}. We assign 1 as term frequency for a missing token in the BM25 model 
while this token appears in the learned token list of a document. The justification is that a zero weight is to be too abrupt when such a term is considered to be useful for a document
based on a learned neural model. Having term frequency one means that this token is present in the document, even with the lowest value.
\item {\bf Scaled alignment}.
    This alternative replaces the missing weights in the BM25  model  based on a scaled learned score by
using  the ratio of mean values of non-zero weights in both models.
For document ID $d$ that contains term $t$, let its BM25 weight  be 
$w_B(t,d)$ and its learned weight be $w_L(t,d)$.
Let $w^*_B(t,d)$ be an adjusted BM25 weight. Set
$P_B$ contains all posting records with nonzero  BM25 weights. Set $P_L$ contains posting records with non-zero learned weights. Then $w_B^*(t, d)$ is defined as:

    \begin{equation*}
    w_B^*(t, d)=\left\{
    \begin{aligned}
    w_B(t, d) & , & w_B(t, d) \neq 0, \\
     \frac{\sum_{(t', d')\in P_B} w_B(t', d')/|P_B|}{\sum_{(t', d')\in P_L} w_L(t', d')/|P_L|}w_L(t, d) & , & w_B(t, d)=0.
    \end{aligned}
    \right.
    \end{equation*}


\end{itemize}

\comments{
It is  possibl that there is no learned weight for some terms while their BM25 weights are not zero.
This is because neural weight follows  the tokenizer used in a pretrained language model such as BERT  and the size of
its volcalbulary size can be  smaller than that of the original word-based volcalbulary. 
When BM25 model is well aligned with a learned model, the number of above cases is very small.
But  BM25 model is not well aligned with a learned model such as  SPLADE, the number of above cases is large.
For SPLADE, we have excluded all posting records with zero learned weights and non-zero BM25 weights in the merged index during our evaluation studies.
The reason is that when including all such posting records, the retreval time can be  more than doubled in the tested MS MARCO datasets  while
relevance is not affected as the WordPiece tokenizer is sufficient for the query test sets used.
}

\section{Evaluations}
\label{sect:evaluation}

\comments{
Our evaluation answers the following research questions.
Q1) How does 2GTI compare with the baseline GTI  in relevance effectiveness and latency reduction?

Q2) How does 2GTI compare with the original retrieval without BM25 guidance?

Q3) What is the impact of  different weight filling  alignments  for 2GTI and  also GTI?

Q4) What is the usefulness of  threshold over-estimation?

Q5) What is the impact of adjusting $\alpha$ and $\beta$ parameters in 2GTI?
We also  evaluate  a  design alternative to control the pruning influence of BM25 using threshold under-estimation. 

Q6) How does VBMW-2GTI compare to VBMW-GTI  and MaxScore-2GTI for short queries when $k$ is small? 

}




\begin{table}[h]
\small

\caption{ Model characteristics with MS MARCO Dev set}
\label{tab:statss}

\begin{tabular}{l|cccc}
\hline
\textbf{Index}  & \textbf{Avg. Q Length} & \textbf{\#Postings} & \textbf{Size} & \textbf{Merged} \\ \hline
\multicolumn{5}{l}{ MS MARCO passages} \\ \hline
BM25-T5      & 4.5 (4.5)      &  644M   &   1.2G & -  \\
DeepImpact   & 4.2 (4.2)      &  644M     & 2.6G & 2.6G\\
BM25-T5-B   & 6.6 (6.6)      & 699M       &  1.2G & - \\
UniCOIL      & 6.6 (686.3)    & 592M         & 1.5G & 1.7G\\
SPLADE++ & 23.3 (867.6)     & 2.62B       &  5.6G & 8.3G\\
\hline
\multicolumn{5}{l}{MS MARCO documents} \\ \hline
BM25-T5-B   & 6.8 (7.0)    &  3.39B       & 5.4G & - \\
UniCOIL     & 6.6 (685.0)  & 3.04B       &  7.0G & 8.3G \\
\hline
\end{tabular}
\end{table}


\comments{
\begin{table}[]
\begin{tabular}{l|lllll}
Index      & \# Terms & Avg. Q Length & \# Postings & Mismatch Ratio & Index Size \\
\hline
BM25-T5    & 3,929,111   & 4.5 (4.5)      &  452M   & -             &            \\
DeepImpact & 3,514,102    & 4.5 (4.5)      &  628M     &               & 2.6G       \\
BM25-T5-B  & 27,676    & 6.6 (6.6)      & 699M       & -             & 1.2G       \\
UniCOIL    & 27,678    & 6.6 (686.3)    & 592M        & 1.4\%         & 1.7G       \\
SPLADE v2   & 27,832    & 23.3 (867.6)     & 2.62B       & 98.6\%          & 8.3G       \\
\hline
BM25-T5-B  &  29,144  & 6.8 (7.0)    &  3.39B           & -             &  5.4G     \\
UniCOIL    & 29,141  & 6.6 (685.0)  & 3.04B       & 0.9\%         &  8.3G   
\end{tabular}
\end{table}
}

\noindent
\textbf{Datasets and settings.} 
Our evaluation uses
the MS MARCO document and passage collections~\cite{Craswell2020OverviewOT, Campos2016MSMARCO},
and 13 publicly available BEIR datasets~\cite{thakur2021beir}. 
The results for the BEIR datasets are described in Appendix~\ref{sect:extraeval}. 
For MS MARCO, the contents in the document collections are segmented during indexing and re-grouped after retrieval using ``max-passage'' strategy following~\cite{Lin_etal_SIGIR2021_Pyserini}.
There are 8.8M passages with an average length of 55 words, and 3.2M documents with an average length of 1131 words before segmentation.
The Dev query set for passage and document ranking has 6980 and 5193  queries respectively  with about one judgment label per query. 
Each of the passage/document  ranking task of TREC Deep Learning (DL) 2019 and 2020 tracks provides
43 and 54 queries respectively with many judgment labels per query.

In producing an inverted index, all words use lower case letters. Following GT, we packed the learned score and the term frequency in the same integer. For DeepImpact, we adopt GT's index\footnote{https://github.com/DI4IR/dual-score} directly. The BM25-T5's index is dumped from the DeepImpact index. Both BM25-T5 and DeepImpact are using natural words tokenization.

SPLADE and uniCOIL 
use the BERT's Word Piece tokenizer. In order to align with them, the BM25-T5-B index reported in the following tables uses the same tokenizer as well.
The impact scores of uniCOIL is obtained from Pyserini~\cite{Lin_etal_SIGIR2021_Pyserini}~\footnote{https://github.com/castorini/pyserini/blob/master/docs/experiments-unicoil.md}.
For SPLADE, in order to achieve the best performance, we retrained the model following the setup in SPLADE++~\cite{Formal_etal_SIGIR2022_splade++}.
We start from the pretrained model coCondenser~\cite{coCondenserMarco}
and  
distill using the sentenceBERT hard negatives~\footnote{https://huggingface.co/datasets/sentence-transformers/msmarco-hard-negatives} from a cross-encoder 
teacher~\cite{msmarcoMiniLMv2}
with MarginMSE loss. For FLOP regularization, we use 0.01 and 0.008 for query and documents respectively.  We construct the inverted indexes, convert them to the PISA format,
and compress them using SIMD-BP128~\cite{2015Lemire} following~\cite{2019ECIRMallia, mallia2022faster}.

\begin{table*}[htbp]
    \centering
        \caption{ A Comparison of  2GTI, GTI and the original method with no BM25 guidance for MaxScore}
\resizebox{\textwidth}{!}{%
\label{tab:maxscore-overall}
\renewcommand{\arraystretch}{1.2}
\begin{tabular}{l||cr|cr|cr||cr|cr|cr}
\hline
 & \multicolumn{6}{c||}{$k=10$} &  \multicolumn{6}{c}{$k = 1000$} \\ \hline
 & \multicolumn{2}{c|}{MS MARCO Dev} & \multicolumn{2}{c|}{DL'19} & \multicolumn{2}{c||}{DL'20} & \multicolumn{2}{c|}{MS MARCO Dev} & \multicolumn{2}{c|}{DL'19} & \multicolumn{2}{c}{DL'20} \\
\multicolumn{1}{c||}{\textbf{Method}} & \multicolumn{1}{c}{\textbf{MRR (Recall)}} & \multicolumn{1}{c|}{\textbf{MRT ($P_{99}$)}} & \multicolumn{1}{c}{\textbf{nDCG (Recall)}} & \multicolumn{1}{c|}{\textbf{MRT}} & \multicolumn{1}{c}{\textbf{nDCG (Recall)}} & \multicolumn{1}{c||}{\textbf{MRT}} & \multicolumn{1}{c}{\textbf{MRR (Recall)}} & \multicolumn{1}{c|}{\textbf{MRT ($P_{99}$)}} & \multicolumn{1}{c}{\textbf{nDCG (Recall)}} & \multicolumn{1}{c|}{\textbf{MRT}} & \multicolumn{1}{c}{\textbf{nDCG (Recall)}} & \multicolumn{1}{c}{\textbf{MRT}} \\ \hline \hline

  \multicolumn{13}{l}{\Large \textbf{SPLADE++}, Passages. $\alpha=1$. For 2GTI-Accurate: $\beta=0$; for 2GTI-Fast: $\beta=0.3$ ($k=10$), $0.9$ ($k=1000$). For GTI and 2GTI: $\gamma=0.05$.} \\ \hline

 BM25-T5-B & 0.2611 (0.5179) & 1.7 (8.7) & 0.5931 (0.1556) & 0.8 & 0.5981 (0.2034) & 1.0 & 0.2611 (0.9361) & 9.2 (28.4) & 0.5931 (0.7608) & 6.4 & 0.5981 (0.7641) & 7.4 \\ 
SPLADE++-Org & 0.3937 (0.6801) & 121 (483) & 0.7304 (0.1776) & 135 & 0.7290 (0.2437) & 138 & 0.3937 (0.9832) & 278 (819) & 0.7304 (0.8286) & 317 & 0.7290 (0.8287) & 307 \\
\hline

 -GT & 0.2720 (0.5246) & 121 (438) & 0.6379 (0.1677) & 130 & 0.6106 (0.2192) & 139 & 0.2973 (0.9648) & 336 (1048) & 0.6636 (0.8030) & 330 & 0.6605 (0.8072) & 344 \\
 -GTI & 0.2687 (0.5209) & 118 (440) & 0.6352 (0.1669) & 131 & 0.6083 (0.2190) & 139 & 0.2961 (0.9648) & 332 (1059) & 0.6595 (0.8025) & 318 & 0.6587 (0.8066) & 333 \\

 -2GTI-Accurate & \textbf{0.3939$^\dag$  (0.6812)} & 31.1 (171) & \textbf{0.7401 (0.1846)} & 31.9 & \textbf{0.7278 (0.2480)} & 36.8 & \textbf{0.3946}$^\dag$  (0.9799) & 109 (478) & \textbf{0.7394} (0.8209) & 123 & 0.7297 (\textbf{0.8339}) & 132 \\
 -2GTI-Fast & 0.3934$^\dag$  (0.6792) & \textbf{22.7 (116)} & 0.7380 (0.1837) & \textbf{23.5} & \textbf{0.7278 (0.2480)} & \textbf{26.2} & 0.3937$^\dag$  (0.9662) & \textbf{43.1 (144)} & \textbf{0.7394} (0.8218) & \textbf{42.1} & \textbf{0.7306} (0.8205) & \textbf{45.0} \\ \hline \hline

 \multicolumn{13}{l}{\Large \textbf{UniCOIL}, Passages. $\alpha=1$. For 2GTI-Accurate: $\beta=0$; for 2GTI-Fast: $\beta=0.3$ ($k=10$), $1$ ($k=1000$). For GTI and 2GTI: $\gamma=0.1$.} \\ \hline
 
 BM25-T5-B & 0.2611 (0.5179) & 1.7 (8.7) & 0.5931 (0.1556) & 0.8 & 0.5981 (0.2034) & 1.0 & 0.2611 (0.9361) & 9.2 (28.4) & 0.5931 (0.7608) & 6.4 & 0.5981 (0.7641) & 7.4 \\ 
    UniCOIL-Org & 0.3516 (0.6168) & 10.5 (102) & 0.7027 (0.1761) & 10.4 & 0.6746 (0.2346) & 14.2 & 0.3516 (0.9582) & 35.3 (197) & 0.7027 (0.7822) & 38.6 & 0.6746 (0.7758) & 42.8 \\
    \hline
 -GT & 0.3347 (0.5639) & \textbf{2.1 (11.2)} & 0.6990 (0.1770) & 1.7 & 0.6769 (0.2444) & \textbf{2.4} & 0.3514 (0.9458) & \textbf{10.6} (33.9) & 0.7028 (0.7857) & \textbf{10.3} & 0.6746 (0.7741) & \textbf{10.8} \\
 -GTI & 0.3384 (0.5678) & \textbf{2.1 (11.2)} & 0.6959 (0.1733) & \textbf{1.6} & 0.6739 (0.2422) & \textbf{2.4} & 0.3552 (0.9468) & \textbf{10.6 (33.2)} & \textbf{0.7130 (0.7917)} & \textbf{10.3} & \textbf{0.6899 (0.7857)} & \textbf{10.8} \\
 -2GTI-Accurate & \textbf{0.3550$^\dag$  (0.6205)} & 3.3 (19.2) & \textbf{0.7135 (0.1769)} & 2.4 & \textbf{0.6891 (0.2451)} & 3.4 & \textbf{0.3554}  (0.9566) & 16.9 (68.4) & \textbf{0.7130} (0.7904) & 15.5 & \textbf{0.6899} (0.7823) & 16.5 \\
 -2GTI-Fast & 0.3548$^\dag$  (0.6193) & 2.6 (14.3) & \textbf{0.7135 (0.1769)} & 1.9 & \textbf{0.6891 (0.2451)} & 2.8 & 0.3552 (0.9468) & \textbf{10.6 (33.2)} & 0.7129 (\textbf{0.7917}) & \textbf{10.3} & \textbf{0.6899 (0.7857)} & \textbf{10.8} \\ \hline \hline

 \multicolumn{13}{l}{\Large \textbf{UniCOIL}, Documents. $\alpha=1$. For 2GTI-Accurate: $\beta=0$; for 2GTI-Fast: $\beta=0.5$ ($k=10$), $1$ ($k=1000$). For GTI and 2GTI: $\gamma=0.1$.} \\ \hline

 BM25-T5-B & 0.2716 (0.4749) & 2.7 (13.9) & 0.4246 (0.0741) & 3.6 & 0.4463 (0.1693) & 3.1 & 0.2950 (0.9197) & 12.4 (50.0) & 0.5594 (0.5690) & 15.0 & 0.5742 (0.7148) & 15.7 \\

UniCOIL-Org & 0.3313  (0.5638) & 26.0 (252) & 0.5477 (0.0880) & 28.1 & 0.4996 (0.1920) & 27.3 & 0.3530 (0.9426) & 71.0 (447) & 0.6415 (0.5864) & 79.2 & 0.6059 (0.7502) & 86.0 \\
\hline 

 -GT & 0.3280 (0.5455) & \textbf{6.8 (36.2)} & 0.5199 (0.0779) & 6.2 & 0.4903 (0.1871) & \textbf{7.1} & 0.3530 (0.9361) & 20.9 (73.8) & 0.6445 (0.5842) & \textbf{21.6} & 0.6059 (0.7444) & 22.7 \\
 -GTI & 0.3334 (0.5531) & 6.9 (36.6) & 0.5223 (0.0781) & \textbf{6.1} & 0.4905 (0.1857) & 7.2 & 0.3639 (0.9368) & \textbf{20.6 (72.9)} & \textbf{0.6581} (0.5920) & 21.7 & \textbf{0.6156} (0.7445) & 22.6 \\

 -2GTI-Accurate & \textbf{0.3423$^\dag$  (0.5710)} & 10.2 (56.4) & \textbf{0.5486 (0.0852)} & 8.9 & \textbf{0.4998 (0.1886)} & 10.5 & \textbf{0.3644 (0.9422)} & 33.2 (149) & \textbf{0.6581 (0.5932)} & 32.8 & \textbf{0.6156 (0.7477)} & 37.4 \\
 -2GTI-Fast & 0.3418$^\dag$  (0.5663) & 7.4 (40.4) & 0.5453 (0.0847) & 6.5 & 0.4997 (0.1845) & 8.2 & 0.3639 (0.9368) & \textbf{20.6} (73.0) & \textbf{0.6581} (0.5920) & 21.7 & \textbf{0.6156} (0.7445) & \textbf{22.5} \\ \hline \hline

 \multicolumn{13}{l}{\Large \textbf{DeepImpact}, Passages. $\alpha=1$. For 2GTI-Accurate: $\beta=0$; for 2GTI-Fast: $\beta=0.5$ ($k=10$), $1$ ($k=1000$). For GTI and 2GTI: $\gamma=0.5$.} \\ \hline
 BM25-T5 & 0.2723 (0.5319) & 0.7 (4.7) & 0.6283 (0.1611) & 0.3 & 0.6321 (0.2218) & 0.5 & 0.2723 (0.9348) & 4.9 (21.3) & 0.6283 (0.7704) & 4.6 & 0.6321 (0.7598) & 5.6 \\ 
 DeepImpact-Org & 0.3276 (0.5844) & 9.4 (73.3) & 0.6964 (0.1698) & 4.9 & 0.6524 (0.2035) & 7.8 & 0.3276 (0.9474) & 23.8 (97.0) & 0.6964 (0.7623) & 25.5 & 0.6524 (0.7534) & 38.9 \\
 \hline
 -GT & 0.3276 (0.5805) & \textbf{0.7 (4.7)} & 0.6997 (0.1698) & \textbf{0.3} & 0.6682 (0.2194) & \textbf{0.5} & 0.3276 (0.9454) & \textbf{5.1 (21.1)} & 0.6964 (0.7745) & 4.9 & 0.6527 (0.7574) & 5.8 \\

 -GTI & 0.3375 (0.5866) & \textbf{0.7 (4.7)} & 0.6953(0.1690) & \textbf{0.3} & 0.6846 (\textbf{0.2372}) & \textbf{0.5} & 0.3413 (0.9455) & 5.2 (21.7) & \textbf{0.7072} (0.7871) & \textbf{4.7} & \textbf{0.6854} (0.7745) & \textbf{5.7} \\
 -2GTI-Accurate & \textbf{0.3405$^\dag$ (0.5987)} & 1.1 (8.1) & \textbf{0.7065 (0.1703)} & 0.7 & 0.6850 (0.2361) & 1.2 & \textbf{0.3414} (0.9469) & 7.4 (33.0) & \textbf{0.7072 (0.7875)} & 7.8 & \textbf{0.6854 (0.7778)} & 9.1 \\
 -2GTI-Fast & 0.3395$^\dag$ (0.5934) & \textbf{0.7} (5.1) & 0.7045 (0.1693) & 0.4 & \textbf{0.6882} (0.2371) & 0.6 & 0.3413 (0.9455) & 5.2 (21.7) & \textbf{0.7072} (0.7871) & \textbf{4.7} & \textbf{0.6854} (0.7745) & \textbf{5.7} \\ \hline

\end{tabular}%
}
\end{table*}

Table~\ref{tab:statss} shows the dataset and index  characteristics of the different weighting models 
on the MS MARCO Dev dataset.  
Following \cite{mackenzie2021wacky}, we assume that a query can be pre-processed with  a "pseudo-document" trick that  
assigns custom weights to  query terms in uniCOIL and SPLADE.
Therefore, there may be token repetition in each query to reflect token weighting. 
Column 1 is the mean query length in tokens without or with counting duplicates.  
Column 3 is the inverted index size while the last column is the size after merging BM25 and learned weights in
the index.


The C++ implementation of  2GTI with the modified  MaxScore and VBMW algorithms 
are embedded in PISA~\cite{mallia2019pisa}, and the code will be released in \url{https://github.com/Qiaoyf96/2GTI}.
Our evaluation using this implementation 
runs as a single thread on a Linux server with Intel i5-8259U 2.3GHz  and 32GB memory.
Weights are chosen by sampling queries from the MS MARCO training dataset.

\textbf{Metrics.} 
For MS MARCO Dev set, we report the relevance in terms of 
mean reciprocal rank (MRR@10 on passages and MRR@100 on documents), following the official leader-board standard.
We also report the recall@k ratio which is the percentage of relevant-labeled results appeared in the final top-$k$ results.
For TREC DL test sets, we report  normalized discounted cumulative 
gain (nDCG@10)~\cite{NDCG}. The above reporting follows the common practice of the previous 
work (e.g. ~\cite{mallia2021learning,2021NAACL-Gao-COIL,gao2021complementing,Formal2021SPLADEV2}). 

Before timing queries, all compressed posting lists and metadata for tested queries are pre-loaded into memory, 
following the same assumption in \cite{khattab2020finding, Mallia2017VBMW}.
Retrieval mean response times (MRT) are reported in milliseconds.
The 99th percentile time ($P_{99}$)  is reported within parentheses in the tables below,
corresponding  to the time occurring in the 99th percentile denoted as tail latency in \cite{Mackenzie2018}.

\textbf{Statistical significance.} 
For the reported numbers  on MS MARCO passage and document Dev sets in the rest of this section, 
we have performed a pairwise t-test on relevance difference between 
2GTI and a GTI baseline, and  between  2GTI and the original learned sparse retrieval without BM25 guidance. 
No statistically significant degradation has been observed at the 95\% confidence level.  
We have also performed a pairwise t-test comparing the reported relevance numbers of 2GTI and  GTI and 
mark `$^\dag$' in the evaluation tables if there is a statistically significant improvement by 2GTI over GTI at the
95\% confidence level.
We do not perform a t-test on DL'19 and DL'20 query  sets as the number of  queries in these sets  is small.










\comments{
1) It seems that the two-level 2GT method has more advantages for DL 19 and DL20 passage ranking (than Dev set)  in terms of relevance for SPLADE v2 in both k=10, and 1000,   
 2GT gets some degree of advantage for uniCOIL and k=10 for DL 19 and DL 20.

 MS MARCO dev set,/SPLADE v2,  2GT  does much better than GT for k=10 while the gap is narrow for k=1000.  For Dev/uniCOIL, 2GT does better than GT for k=10, no difference for k=1000.

2)  For DL 19 and DL 20 document ranking task, It seems that 2GT  has no advantage f  when k=1000 for uniCOIL compared to GT.
But it has some limited  advantage in relevance for k=10, uniCOIL.
The difference  is in the third digit of   nDCG, you have to use 4 digits to report.

For Dev set of document ranking, 2GT has advantage for k=10 in uniOIL, but no difference for k=1000.
}

{\bf Overall results with MS MACRO.}
Table~\ref{tab:maxscore-overall} lists a comparison of 2GTI with the baseline 
using  three sparse representations for retrieval on MS MARCO and TREC DL datasets.
2GTI uses scaled filling alignment as default while  GTI uses zero filling as specified in \cite{mallia2022faster}.
The $\gamma$ value is   chosen the same for GTI and 2GTI for each representation, which is the best for most of cases.
The ``accurate'' configuration denotes the one that reaches the highest relevance score. 
The ``fast'' configuration denotes the one that reaches a relevance score within 1\% of the accurate configuration
while being much more faster.


{\bf 2GTI vs. GTI in SPLADE++.} Table~\ref{tab:maxscore-overall} shows
2GTI with  default scaled filling significantly outperforms GTI with default zero filling for 
SPLADE++, where BM25 index is not well aligned.
``SPLADE++-Org'' denotes the original MaxScore retrieval performance using SPLADE++ model trained by ourselves
and its MRR@10 number  is higher than what has been reported in ~\cite{Formal_etal_SIGIR2022_splade++}.
When $k=1,000$, GT is slightly better than GTI,
and with the fast configuration, MRR@10 of 2GTI is 32.4\% higher than that of GT while  2GTI is  7.8x faster than GT for the Dev set.
The significant increase in nDCG@10 and decrease in the MRT are also observed in DL'19 and DL'20.
When $k=10$, there is also a large relevance increase  and time reduction
from GTI or GT to 2GTI for all three test sets.
For example, the relevance is 46.4\% or 44.6\% higher and the mean latency is 5.2x or 5.3x faster for the Dev set.

Compared to the original MaxScore method, 
2GTI has about the same relevance score for both $k=10$ and $k=1,000$
while having much smaller latency. For example,   6.5x reduction (278ms vs. 43.1ms) for the Dev passage set when $k=1,000$
and 5.3x reduction when $k=10$ (121ms vs 22.7ms) with the 2GTI-fast configuration.


{\bf 2GTI vs. GTI in DeepImpact and uniCOIL.} As shown in Table~\ref{tab:maxscore-overall},
GTI (or GT)  performs 
very well for $k=1,000$ in both DeepImpact and uniCOIL in speeding up retrieval while maintaining a relevance similar as the original retrieval.
The two-level differentiation for dynamic index pruning does not improve relevance or shorten retrieval time.
This can be explained as  BM25-T5 index is well aligned with the DeepImpact index and with the uniCOIL index.
Also because of this reason, filling to address index alignment is not needed with no improvement  in these two cases.

When $k$ decreases from 1,000 to 10, as shown in Figure~\ref{fig:overview} discussed in Section~\ref{sect:design},
the recall ratio starts to drop, and relevance effectiveness degrades.
When $k=10$ as shown  in Table~\ref{tab:maxscore-overall}, DeepImpact-2GTI-fast 
can increase MRR@10 from 0.3375 by GTI to 0.3395 
for the Dev set and deliver slightly higher MRR@10 or nDCG@10 scores than GTI in DL'19 and DL'19 sets.
For uniCOIL, 2GTI-fast increases MRR@10 from 0.3384 by GTI to 0.3548 for the Dev set and increases nDCG@10 from 0.6959 to 0.7135 for DL'19.
There is also a modest relevance increase for DL'20 passages with $k=10$ and a similar trend is observed  for the document retrieval task. 
The price paid for 2GTI is its retrieval latency increase while its latency is still much smaller than the original retrieval time.

\comments{
GTI  (or GT) does well for $k=1000$ in both DeepImpact and uniCOIL 
Traversal (GT)~\cite{mallia2022faster} shows promising and effective optimization for learned sparse representations including DeepImpact and uniCOIL, when the number of retrieved documents $k$ is large ($k$=1000). These two sparse methods expand the documents using docTTTTTquery~\cite{nogueira2019doc2query}, the same way as the guiding score BM25-T5. Therefore, the posting lists of each token contain almost the same documents across the BM25-T5 and the learned sparse scores. For retriever SPLADE v2, the document expansion is done in a completely different way, so that the expanded tokens vary. This token mismatch issue causes the relevance drop on SPLADE v2.

Guided Traversal (GT)~\cite{mallia2022faster} shows promising and effective optimization for learned sparse representations including DeepImpact and uniCOIL, when the number of retrieved documents $k$ is large ($k$=1000). These two sparse methods expand the documents using docTTTTTquery~\cite{nogueira2019doc2query}, the same way as the guiding score BM25-T5. Therefore, the posting lists of each token contain almost the same documents across the BM25-T5 and the learned sparse scores. For retriever SPLADE v2, the document expansion is done in a completely different way, so that the expanded tokens vary. This token mismatch issue causes the relevance drop on SPLADE v2.

In order to close the gap between two distinct posting lists, we propose to complete the BM25-T5's frequency with constant number 1, or scaled learned sparse scores. On SPLADE v2, without BM25-T5 value completion, "none" shows the GT suffers from token mismatch and results in low relevance score 0.2973, versus originally 0.3937 MRR@10 on MS MARCO Dev. The "const" method relieves the problem, and the MRR@10 escalates to 0.3686. The "scale" method further increases the relevance to the same level of original SPLADE v2 with a 0.3914 MRR@10.

By changing the guidance mode, GlobalGT achieves almost the same relevance as the original sparse retriever on all the settings. GlobalGT uses BM25-T5 scores to skip documents locally, which is less effective but more accurate. In order to do so, it also brings in overhead in terms of tracking block max scores for both BM25 and learned representations. These factors make the GlobalGT slower than GT. However, GlobalGT maintains the same relevance while still significantly reduces the time compared with the original sparse retriever.

By tuning beta, GlobalGT can further decrease the latency, while maintaining high relevance score. Compare SPLADE v2-Scale-GTI and SPLADE v2-Scale-GlobalGTI (beta90\%), the relevance goes up from 0.3916 to 0.3934, with 1.14x latency (beta90\%) when $k$=1000.

When the number of retrieved documents $k$ is small ($k$=10), the GT becomes less effective on all the three sparse retrievers we have tested. On MS MARCO Dev, for DeepImpact, MRR@10 drops from 0.3276 to 0.3259, while the GlobalGT maintains 0.3405; for uniCOIL, MRR@10 drops from 0.3516 to 0.3295; and for SPLADE v2, MRR@10 drops from 0.3940 to 0.3205. The GlobalGT is slower than GT, but the relevance has minimum or no degradation on all the settings.

It is similar on the MS MARCO Doc dataset. However, the retrieved documents are pooled using MaxP, some of the errors can be covered, so that the relevance drop for GT and GTI is not significant. 

}

\comments{
\subsection{Token Alignments}

\begin{figure*}[h]
\begin{center}
  \includegraphics[width=2\columnwidth]{LearnedScoreTopKAccelerate/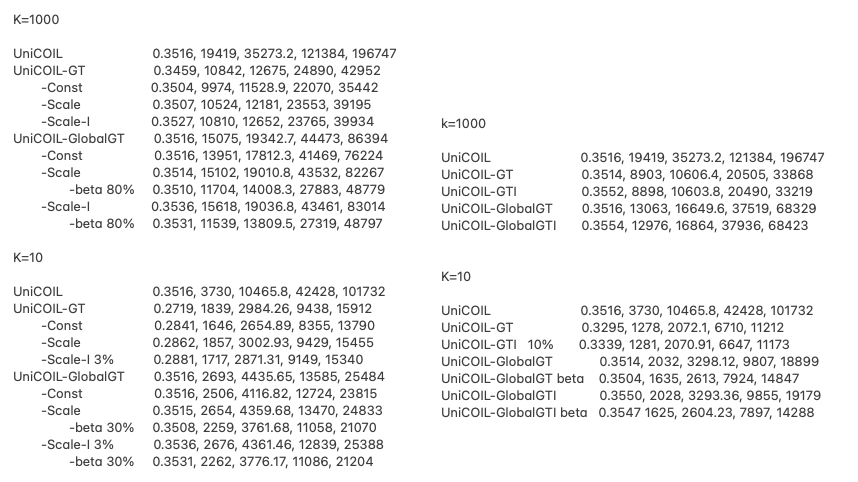}
\end{center}
  \caption{Token Alignment.}
  \label{fig:tokalign}
\end{figure*}

On the left, the uniCOIL is guided by BM25 without expansion, while on the right, the uniCOIL is guided by BM25-T5. uniCOIL is expanded using T5, so the latter is more aligned. Therefore, the GT has a better performance. This experiment shows that the mis-alignment has a big impact on the GT method, while the filling and GlobalGT help relieve the problem.
}

\comments{
\section{Additional Evaluation Results}
\label{sect:extraeval}
This section evaluates  the impact of  design alternatives and parameter adjustment for 2GTI
with  MS MARCO passages and discusses performance sensitivity of 2GTI with weight distributions.
}
 
{\bf Design options
with weight alignment and threshold over-estimation.}
Table~\ref{tab:options}
examines the impact of weight alignment and a design  alternative  based on  threshold over-estimation 
 for MS MARCO passage  Dev set using SPLADE++ when $k=10$. 
In the top portion of this table, 
threshold over-estimation by a factor of $F$ (1.1, 1.3, and 1.5)
is used in the original retrieval algorithm without BM25 guidance, and these factor choices are similar as ones in
~\cite{2012SIGIR-SafeThreshold-Macdonald, 2013WSDM-SafeThreshold-Macdonald, 2017WSDM-DAAT-SAAT}. 
That essentially sets $\alpha=0$, $\beta=0$, and $\gamma=0$ while multiplying $\theta_{Gl}$ and $\theta_{Lo}$ by the above factor in 2GTI. 
The result shows that even threshold over-estimation can reduce the retrieval time, relevance reduction is significant, meaning
that the aggressive threshold used causes incorrect dropping of some desired  documents. 

The second portion of  Table~\ref{tab:options} examines the impact of 
different weight filling methods described in Section~\ref{sect:align} for alignment 
when they are applied to GTI and 2GTI, respectively.
In both cases, scaled filling marked as ``/s'' 
is most  effective while one-filling marked as 		``/1''  outperforms zero-filling marked as ``/0''  also.
The MRT of 2GTI/s becomes  10.5x smaller than 2GTI/0
while there is no negative impact to its MRR@10. 
The MRT of GTI/s is about   13.0x smaller than GTI/0 while there is  a large MRR@10 number increase.

\begin{table}[tpbh]
\begin{small}
    \centering
        \caption{Impact of  design options on MS MARCO passages}
\label{tab:filling}
        
        \setlength\tabcolsep{3pt}
        
\label{tab:options}
    \begin{tabular}{l|cc|cc}
    \hline
     SPLADE++. $k=10$ & \textbf{MRR@10} & \textbf{Recall@10} & \textbf{\enspace MRT\enspace} & \textbf{\enspace $P_{99}$\enspace}  \\ 
     \hline \hline
     \multicolumn{5}{l}{Threshold over-estimation} \\ \hline
     Original & 0.3937 &0.6801 & 121 & 483 \\ 
        - $F = 1.1$  & 0.3690 & 0.5707 & 107 & 457\\
        - $F = 1.3$  & 0.3210 & 0.4393 & 95.0 & 420\\
        - $F = 1.5$ & 0.2825 & 0.3670  & 88.2 & 393\\ \hline  
     \multicolumn{5}{l}{Weight alignment for GTI ($\alpha=1, \beta=1, \gamma=0.05)$} \\ \hline
     GTI/0  & 0.2687 & 0.5209 & 118 & 440 \\
        GTI/1  & 0.3036 & 0.5544 & 26.7 & 114 \\
        GTI/s  & 0.3468 & 0.5774 & 9.1 & 36.1 \\ \hline 
     \multicolumn{5} {l}{ Weight alignment  for 2GTI-Accurate ($\alpha=1, \beta=0, \gamma=0.05$)} \\ \hline
    2GTI/0  & 0.3933 & 0.6799  & 328 & 1262 \\
         2GTI/1  & 0.3933 & 0.6818 & 89.3 & 393 \\
         2GTI/s  & 0.3939 & 0.6812 & 31.1 & 171 \\ 
\comments{
    \hline
    \multicolumn{5}{l}{2GTI on Efficient Splade Index VI) BT-SPLADE-L} \\ \hline
    Efficient & 0.3799  & 0.6626 & 17.4 & 59.4 \\
        - 2GTI/s & 0.3772 & 0.6584 & 8.0 & 27.5 \\ \hline

    \multicolumn{5}{l}{2GTI Propositions Verification} \\ \hline
    BM25Retrieve-SPLADERerank & 0.3461 & 0.5179 & - & - \\
    Linear Comb ($\alpha=\beta = \gamma = 0.05$) & 0.3946 & 0.6805 & 120 & 477 \\
        GTI/s ($\alpha = \beta = 1$, $\gamma = 0.05$) & 0.3468 & 0.5774 & 9.1 & 36.1 \\ 
    2GTI/s ($\alpha=1$, $\beta = \gamma = 0.05$) & 0.3939 & 0.6812 & 29.8 & 165 \\

}
     \hline
    \end{tabular}
\end{small}
\end{table}




\comments{


\begin{figure}[htbp]
\begin{center}
  \includegraphics[width=1.0\columnwidth]{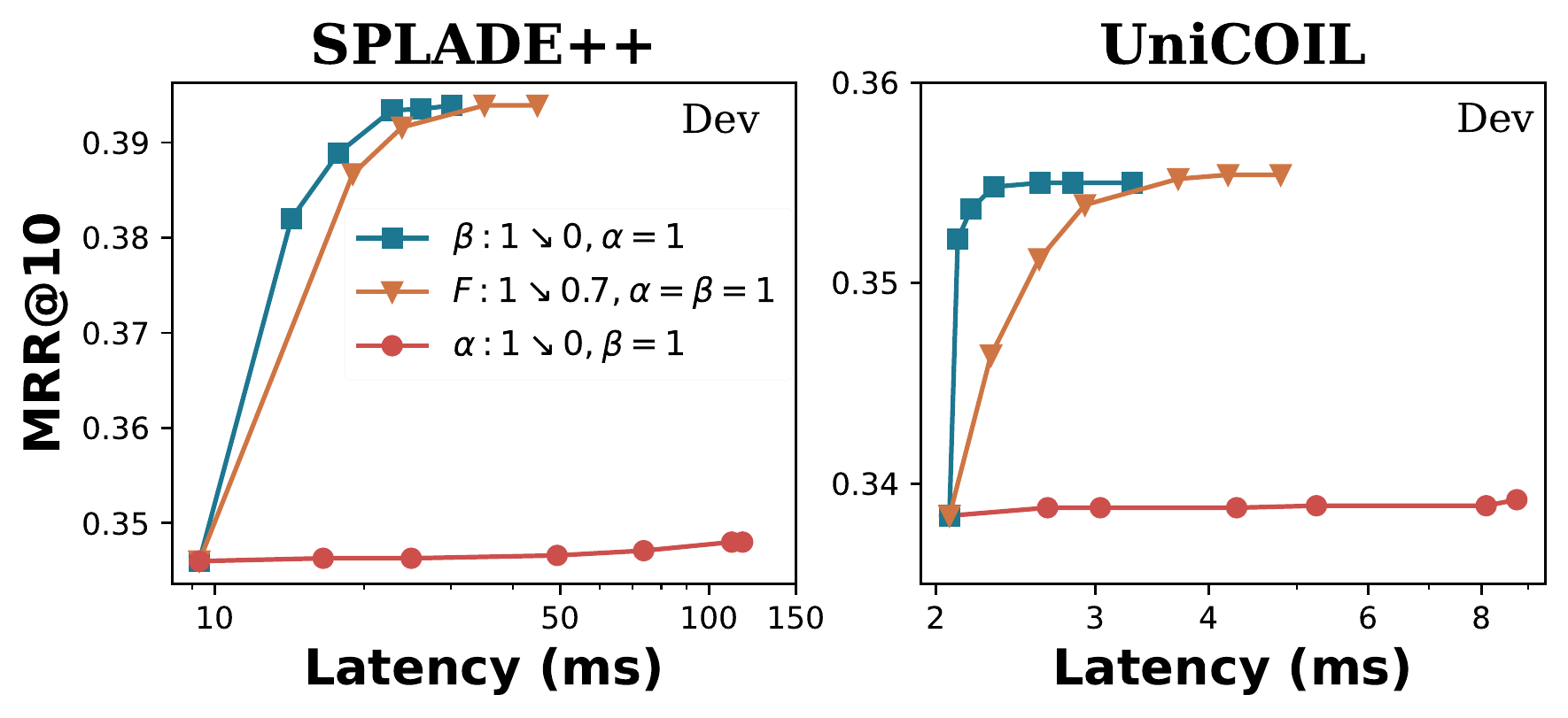}
\end{center}
  \caption{  Controlling influence of BM25 on pruning} 
  \label{fig:local}
\end{figure}

{\bf  Impact of $\alpha$ and $ \beta$ adjustment.}
Figure \ref{fig:local} examines the impact of adjusting parameters  $\alpha$  and
$\beta$ on global and local pruning for the MS MARCO Dev passage test set  when $k=10$
in  controlling the influence of BM25 weights
for SPLADE++ (left) 
and uniCOIL (right).
The $x$ axis corresponds to  the latency increase
while $y$ axis corresponds to the MRR@10 or NDCG@10 increase. 
The results for MS MARCO DL'19, and DL'20 are similar.

The red curve connected with dots fixes $\beta=1$ and varies $\alpha$ from 1 at the left end to  0 at the right end.  
As  $\alpha$ decreases from 1 to 0,  the latency increases because BM25 influences diminish at the global pruning level  and 
fewer documents are skipped. The relevance for  this curve is relatively flat in general
and lower than that of the blue curve, representing the global level BM25 guidance reduces time significantly, while having less impact on the relevance.

The blue curve connected with squares  fixes $\alpha=1$ at the global level and  varies $\beta$ from 1 at the left bottom end to 0 at the right top end.
Decreasing $\beta$ value is positive in general for relevance towards some point as BM25 influence decreases gradually at the local level
and after such a point, the relevance gain becomes much smaller or negative.  For example, after  $\beta$ in the blue curve in SPLADE++ 
becomes 0.3 for the Dev set, its additional decrease does not lift MRR@10 visibly anymore while the latency continues to increase, which 
indicates the relevance benefit has  reached the peak at that point. 
Our experience with the tested datasets is that the parameter  setting for 2GTI can reach a relevance peak  typically when $\alpha$ is close to 1 and $\beta$ varies between 0.3 and 1.

Note that even the above result advocates that 
$\alpha$ is  close to 1,  $\alpha$ and $\beta$ still have different values to be more effective for the tested data, 
reflecting the usefulness
 of two-level pruning control.

{\bf  Threshold under-estimation. }
In Figure \ref{fig:local}, the brown curve connected  with  triangles fixes $\alpha=\beta=1$
and under-estimates the skipping threshold by a factor of $F$ at the local and global levels. 
That behaves like GTI coupled with scaled weight filling as a special case of 2GTI. 
$F$ varies from 1 at the left bottom end to 0.7 at the right top end of this brown curve. 
As $F$ decreases, the skipping threshold becomes very loose and there is less chance  that  desired documents  are skipped. 
Then retrieval relevance can improve while retrieval time can increase substantially. Comparing with the blue curve that adjusts $\beta$,
retrieval takes a much longer time in the brown curve to reach the peak relevance, as shown in this figure, and  
the brown curve is generally placed on the right side of the blue curve. For example on the Dev set with uniCOIL, 
the brown curve with threshold under-estimation reaches  the best relevance  at mean latency  3.7ms 
while the blue curve with $\beta$ adjustment reaches the same peak  at mean  latency 2.3ms, which is 1.6x faster. 

\comments{
{\bf Efficiency driven SPLADE models.} \cite{lassance2022efficiency} proposed efficient modifications to SPLADE v2 models, for example, using L1 regularization for queries, and separating document and query encoders. The MRR of these models is weaker than the SPLADEv2 baseline we report in Table \ref{tab:maxscore-overall}, and they significantly shortened the queries. For example, the configuration VI) BT-SPLADE-L has the average query lengths without counting duplicated terms to be 5.79, which is much smaller than 23.3 of the SPLADE model we use. Comparing these two SPLADE models, the MRT increases from 121ms to 17.4ms, while the MRR@10 drops from 0.3937 to 0.3799. Table \ref{tab:options} shows that  our method can also be used together with this efficiency-driven SPLADE model, resulting in 2.2x speedup with less than 1\% MRR degradation.

{\bf 2GTI propositions verification.} Table 4 verifies Proposition \ref{prop2} that the 2GTI frame work with $\alpha=\beta$ (GTI) or $\beta=\gamma$ has better relevance than that of the SPLADE re-ranking on top-$k$ documents retrieved by BM25-T5. Intuitively, 2GTI acts like a two stage filter. The first stage is a coarse filter based on $\alpha$-combination, and lots of $\alpha$-unfavored documents are still kept; the second stage is a fine filter based on $\beta$-combination, which removes most of $\beta$-unfavored documents.
For 2GTI, $\beta$ is more leaning to the better learned representation, resulting better relevance compared with GTI, where $\beta$ is more leaning to the BM25.
}
\comments{
\subsection{Time-Rel. Tradeoff w/ k}

\todo{prove when k is small, it is more efficient to use GlobalGT than using GT to retrieve more thank k documents.}

\begin{figure}[htbp]
\begin{center}
  \includegraphics[width=\columnwidth]{LearnedScoreTopKAccelerate/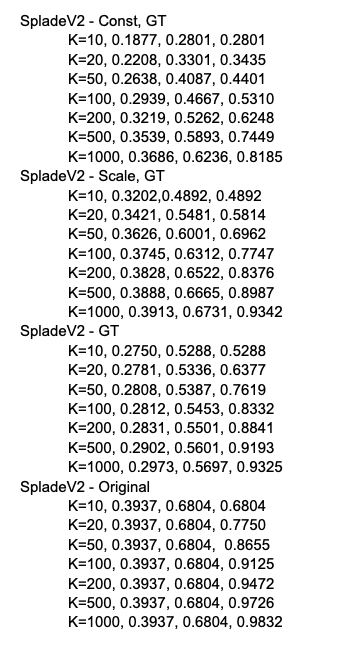}
\end{center}
  \caption{Recall on different k.}
  \label{fig:recall}
\end{figure}

\subsection{Time-Rel. Tradeoff w/ BM25 Parameters}

\begin{figure}[htbp]
\begin{center}
  \includegraphics[width=\columnwidth]{LearnedScoreTopKAccelerate/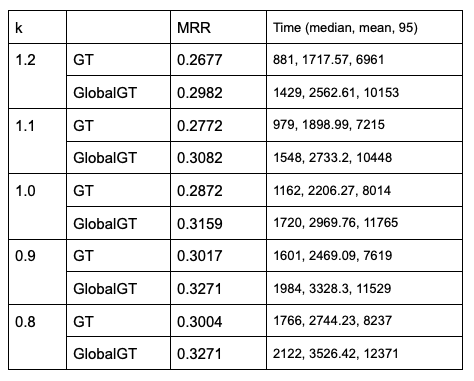}
\end{center}
  \caption{BM25 k1 parameter.}
  \label{fig:k1}
\end{figure}

The choice of BM25 parameter $k1$ matters. Larger $k1$ makes BM25 scores less saturate, so that the skipping is more aggressive, resulting in lower relevance score of GlobalGT.

From \href{https://www.elastic.co/blog/practical-bm25-part-3-considerations-for-picking-b-and-k1-in-elasticsearch}{\textbf{here}}:

For k1, you should be asking, “when do we think a term is likely to be saturated?” For very long documents like books — especially fictional or multi-topic books — it’s very likely to have a lot of different terms several times in a work, even when the term isn’t highly relevant to the work as a whole. For example, “eye” or “eyes” can appear hundreds of times in a fictional book even when “eyes” are not one of the the primary subjects of the book. A book that mentions “eyes” a thousand times, though, likely has a lot more to do with eyes. You may not want terms to be saturated as quickly in this situation, so there’s some suggestion that k1 should generally trend toward larger numbers when the text is a lot longer and more diverse. For the inverse situation, it’s been suggested to set k1 on the lower side. It’s very unlikely that a collection of short news articles would have “eyes” dozens to hundreds of times without being highly related to eyes as a subject.

}

{\bf Performance sensitivity on weight distribution.}
We have distorted the SPLADE++ weight distribution in several ways to examine the sensitivity  of 2GTI
and found that 2GTI is still effective.
For example, we apply a square root function to the  neural weight of every token  in MS MARCO passages,
the relevance score of both original retrieval and  2GTI drops to 0.356 MRR@10 due to weight distortion,
but  2GTI is still 5.0x faster  when $k=10$.

}

{\bf A validation on 2GTI's properties.} 
To corroborate the competitiveness analysis in Section \ref{sect:property}, 
Table~\ref{tab:extra} gives MRR@10 scores and retrieval times in milliseconds 
of the algorithms with different configurations 
   on the Dev set of MS MARCO passages with $k=10$ and SPLADE++ weights.
\comments{ 
that the 2GTI frame work with $\alpha=\beta$ (GTI) or $\beta=\gamma$ has better 
relevance than that of the SPLADE re-ranking on top-$k$ documents retrieved by BM25-T5. 
Intuitively, 2GTI acts like a two stage filter. The first stage is a coarse filter based on $\alpha$-combination, and 
lots of $\alpha$-unfavored documents are still kept; the second stage is a fine filter based on 
$\beta$-combination, which removes most of $\beta$-unfavored documents.
For 2GTI, $\beta$ is more leaning to the better learned representation, 
resulting better relevance compared with GTI, where $\beta$ is more leaning to the BM25.
}
The result shows that the listed configurations of  2GTI  
have a higher MRR@10 number than 
2-stage search $R2_{\alpha,\gamma}$, and
2GTI with $\alpha=\beta=1$ that behaves as GTI.  
MRR@10 of ranking with a simple linear combination of BM25 and learned weights
is only slightly higher than 2GTI, but it is much slower.

\begin{table}[tpbh]
\footnotesize
    \centering

        \caption{A validation on  2GTI's properties. $k=10$}
\label{tab:extra}
        \setlength\tabcolsep{3pt}
\label{tab:extra1}
\resizebox{1.01\columnwidth}{!}{%
    \begin{tabular}{l|cc|cc}
    \hline
     & \textbf{MRR@10} & \textbf{Recall@10} & \textbf{ MRT} & \textbf{ $P_{99}$}  \\ 
     \hline 
\comments{
\hline
     \multicolumn{5}{l}{Threshold over-estimation} \\ \hline
     Original & 0.3937 &0.6801 & 121 & 483 \\ 
        - $F = 1.1$  & 0.3690 & 0.5707 & 107 & 457\\
        - $F = 1.3$  & 0.3210 & 0.4393 & 95.0 & 420\\
        - $F = 1.5$ & 0.2825 & 0.3670  & 88.2 & 393\\ \hline  
     \multicolumn{5}{l}{Weight alignment for GTI} \\ \hline
     GTI/0  & 0.2687 & 0.5209 & 118 & 440 \\
        GTI/1  & 0.3036 & 0.5544 & 26.7 & 114 \\
        GTI/s  & 0.3468 & 0.5774 & 9.1 & 36.1 \\ \hline 
     \multicolumn{5} {l}{ Weight alignment  for 2GTI-Accurate ($\beta=0$)} \\ \hline
    2GTI/0  & 0.3933 & 0.6799  & 328 & 1262 \\
         2GTI/1  & 0.3933 & 0.6818 & 89.3 & 393 \\
         2GTI/s  & 0.3939 & 0.6812 & 31.1 & 171 \\ 
    \hline
}
    $R2_{\alpha,\gamma}$ (BM25 retri. SPLADE++ rerank) & 0.3461 & 0.5179 & - & - \\
     GTI/s ($\alpha=\beta=1$, $\gamma=0.05$) & 0.3468 & 0.5774 & 9.1 & 36.1 \\ 
    2GTI/s ($\alpha=1$, $\beta =\gamma = 0.05$) & 0.3939 & 0.6812 & 29.8 & 165 \\
    2GTI/s-Accurate ($\alpha$=1, $\beta$=0, $\gamma$=0.05) & 0.3939 & 0.6812 & 31.1 & 171 \\
    2GTI/s-Fast ($\alpha=1$, $\beta=0.3, \gamma=0.05$) & 0.3934 & 0.6792 & 22.7 & 116 \\
    Linear  comb. ($\alpha=\beta=\gamma=0.05$) & 0.3946 & 0.6805 & 120 & 477 \\

     \hline
    \end{tabular}
    }
\end{table}

\begin{table}[tpbh]

\small
    \centering
        \caption{Use of 2GTI with a new SPLADE model~\cite{lassance2022efficiency}. $k=10$}
\label{tab:extra2}
        \setlength\tabcolsep{3pt}
        
\label{tab:effi}

    \begin{tabular}{l|cc|cc}
    \hline
    BT-SPLADE-L  & \textbf{MRR@10} & \textbf{Recall@10} & \textbf{MRT} & \textbf{$P_{99}$}  \\ 
     \hline 
     Original MaxScore & 0.3799  & 0.6626 & 17.4 & 59.4 \\
    2GTI/s ($\alpha$=1, $\beta$=0.3, $\gamma$=0.05) 
& 0.3772 & 0.6584 & 8.0 & 27.5 \\ 
     GTI/s ($\alpha=\beta=1$, $\gamma=0.05$) 
& 0.3284 & 0.5520 & 6.6 & 24.9 \\ 
     \hline
    \end{tabular}
\end{table}

{\bf Sensitivity on weight distribution.}
We have distorted the SPLADE++ weight distribution in several ways to examine the sensitivity  of 2GTI
and found that 2GTI is still effective.
For example, we apply a square root function to the  neural weight of every token  in MS MARCO passages,
the relevance score of both original retrieval and  2GTI drops to 0.356 MRR@10 due to weight distortion,
while  2GTI is 5.0x faster  than the original MaxScore when $k=10$.

{\bf  Efficient SPLADE model.} 
Table~\ref{tab:effi} shows the application of 2GTI in a recently published 
efficient SPLADE model~\cite{lassance2022efficiency} which has made several improvements in retrieval speed.
We have used the released checkpoint of this efficient  model called BT-SPLADE-L,
which has  a weaker MRR@10 score, but significantly faster than our trained SPLADE baseline reported in Table~\ref{tab:maxscore-overall}.
When used with this new SPLADE model, 
2GTI/s-Fast version  results in a 2.2x retrieval time
speedup over MaxScore. Its MRR@10 is higher than  GTI/s and  
has less than 1\%  degradation compared to the original MaxScore.

\section{Concluding Remarks} 
The contribution of this  paper is a two-level parameterized  guidance scheme with index alignment to optimize 
retrieval traversal with a learned sparse representation. 
Our formal analysis shows that a properly  configured 2GTI algorithm including GTI
can outperform  a two-stage retrieval and re-ranking algorithm in relevance.

Our evaluation 
shows that the proposed 2GTI scheme can make the BM25 pruning guidance  more accurate
to retain the relevance. 
For MaxScore with SPLADE++ on MS MARCO passages, 2GTI can lift  relevance by up-to 32.4\% and is 7.8x faster than GTI when $k=1,000$,
and by up-to 46.4\% more accurate and 5.2x faster when $k=10$.
In all evaluated cases, 2GTI is much faster than the original retrieval without BM25 guidance.
For example, up-to 6.5x faster than MaxScore on SPLADE++  when $k=10$. 
We have also observed similar performance patterns on BEIR datasets when comparing 2GTI with GTI and
the original MaxScore using SPLADE++ learned weights. 
Compared to other options such as  threshold underestimation to  reduce the influence of BM25 weights,
the two-level control is more accurate in maintaining the strong relevance with a much lower time cost. 
While our study is mainly centered with MaxScore-based retrieval, 
2GTI can be used for VBMW and our evaluation shows that  VBMW-2GTI can be a preferred choice for 
a class of short queries without stop words when  $k$ is small.

{\bf Acknowledgments}. We thank  Wentai Xie, Xiyue Wang,  Tianbo Xiong, and anonymous referees
for their valuable comments and/or help. 
This work is supported in part by NSF IIS-2225942
and has used the computing resource of the XSEDE and ACCESS programs supported by NSF.
Any opinions, findings, conclusions or recommendations expressed in this material
are those of the authors and do not necessarily reflect the views of the NSF.


\comments{
The evaluation shows DTHS effectively accelerates retrieval in
mean response times and 95th percentile times while delivering a very competitive  relevance.
DTHS is significantly faster than a threshold enlarging strategy in reaching a similar relevance level. 
For relatively large $k$ values, DTHS with threshold overestimation can accelerate retrieval further.
Our evaluation is reported on VBMW. The result using  BMW has  a similar pattern and  is not reported here. 
Our future work is to assess the effectiveness of dual guidance  in  other retrieval algorithms  which use
threshold-based skipping. 

1) It seems that the two-level 2GT method has more advantages for DL 19 and DL20 passage ranking (than Dev set)  in terms of relevance for SPLADE v2 in both k=10, and 1000,   
 2GT gets some degree of advantage for uniCOIL and k=10 for DL 19 and DL 20.

 MS MARCO dev set,/SPLADE v2,  2GT  does much better than GT for k=10 while the gap is narrow for k=1000.  For Dev/uniCOIL, 2GT does better than GT for k=10, no difference for k=1000.

2)  For DL 19 and DL 20 document ranking task, It seems that 2GT  has no advantage f  when k=1000 for uniCOIL compared to GT.
But it has some limited  advantage in relevance for k=10, uniCOIL.
The difference  is in the third digit of   NDCG, you have to use 4 digits to report.

For Dev set of document ranking, 2GT has advantage for k=10 in uniOIL, but no difference for k=1000.

}

\comments{

\begin{table}[htbp]
    \caption{A qualitative comparison. R means relevance, T means time latency.
+ means increase, - means decrease. = means about same.
``Org'' stands for the orginal result with any BM25 guidance.{\color{red}For k=10, both scaled filling and 2 level guidance show huge relevance increase. Scaled filling increases MRR@10 from GTI 0.2687 to GTI/s 0.3468, and 2 level guidance further boost the relevance to 2GTI/s(accurate) 0.3939. In the mean time, the scaled filling helps shorten the retrieval time from 118ms to 9.1ms, comparing GTI/s and GTI. For k=1000, comparing GTI and GTI/s, the scaled filling increases the MRR@10 from 0.2961 to 0.3939, while the latency drops from 332ms to 37.6ms on average.}}
    \centering
    \begin{small}
  \begin{tabular}{|c|c|c |c|}
        \hline
Splade V2                &  \multicolumn{3}{c |} {MS MARCO Passages} \\
        \cline{2-4}
                &  Dev  & DL'19 & DL'20  \\  
        \hline
2GTI/s vs. GTI k=10   &    R+ T-     & R+ T-     & R+ T-      \\
        \hline
2GTI/s vs Orginal k=10      &   R{\color{red}=} T-     & R+ T-     & R= T-      \\
        \hline
GTI/s vs GTI  k=10   &   R+ T-     & R+ T-     & R+ T-      \\
        \hline
2GTI/s vs. GTI/s k=10  &   R+  T+     & R+ T+     & R+ T+      \\
        \hline
        \hline
2GTI/s vs. GTI k=1,000   &    R+ T-     & R+ T-     & R+ T-      \\
        \hline
2GTI/s vs. Orginal k=1,000      &   R= T-     & R+ T-     & R= T-      \\
        \hline
GTI/s vs GTI  k=1,000   &   R+ T-     & R+ T-     & R+ T-      \\
        \hline
2GTI/s vs. GTI/s k=1,000  &   R{\color{red}+=}  T{\color{red}+}     & R{\color{red}=} T{\color{red}+}     & R{\color{red}=} T{\color{red}+}      \\
        \hline
    \end{tabular}
    \end{small}
    \label{tab:comparison}
\end{table}

\comments{
\begin{table}[htbp]
    \caption{A qualitative comparison for uniCOIL }
    \centering
    \begin{small}
  \begin{tabular}{|c|c|c |c|}
        \hline
                &  \multicolumn{3}{c |} {uniCOIL} \\
        \cline{2-4}
2GTI vs. GTI   &  Dev  & DL'19 & DL'20  \\  
        \hline
k=10      &    R+ T+     & R+ T+     & R+ T+      \\
        \hline
k=1,000    &    Same &   Same     & Same       \\
        \hline
    \end{tabular}
    \end{small}
    \label{tab:comparison}
\end{table}

}

\begin{table}[htbp]
    \caption{A qualitative comparison for uniCOIL. {\color{red} For k=10, 2 level guidance (2GTI-fast) provides up to 4.4\% relevance increase compared with GTI (DL19 on MS MARCO Doc, 0.5453 vs 0.5223), with 2x-2.5x latency increase. }}
    \centering
    \begin{small}
  \begin{tabular}{|c|c|c |c| c|c|c|}
        \hline
                &  \multicolumn{3}{c |} {MS MARCO Passages} 
                &  \multicolumn{3}{c |} {Documents} \\
        \cline{2-7}
2GTI vs. GTI   &  Dev  & DL'19 & DL'20    &  Dev  & DL'19 & DL'20  \\  
        \hline
k=10      &    R+ T+     & R+ T+     & R+ T+      &    R+ T+     & R+ T+     & R+ T+      \\
        \hline
k=1,000    &    Same &   Same     & Same     &    Same &   Same     & Same       \\
        \hline
    \end{tabular}
    \end{small}
    \label{tab:comparison}
\end{table}

\begin{table}[htbp]
    \caption{A qualitative comparison for DeepImpact }
    \centering
    \begin{small}
  \begin{tabular}{|c|c|c |c|}
        \hline
                &  \multicolumn{3}{c |} {MS MARCO Passages} \\
        \cline{2-4}
2GTI vs. GTI   &  Dev  & DL'19 & DL'20  \\  
        \hline
k=10      &    R{\color{red}+=} T+     & R+ T+     & R+= T+      \\
        \hline
k=1,000    &    Same &   Same     & Same       \\
        \hline
    \end{tabular}
    \end{small}
    \label{tab:comparison}
\end{table}
\comments{
Considering that the average number of words
in queries of popular search engines is
between 2 and 3~\cite{SMH99,JS06},
the proposed technique can be very effective for a search engine which deploys
many disjoint index partitions and
when $k$ does not need to be large for each individual partition
that contributes part of top results.
}

}

\appendix

\section{Additional Evaluation Results}
\label{sect:extraeval}


\begin{figure}[htbp]
\begin{center}
  \includegraphics[width=1.0\columnwidth]{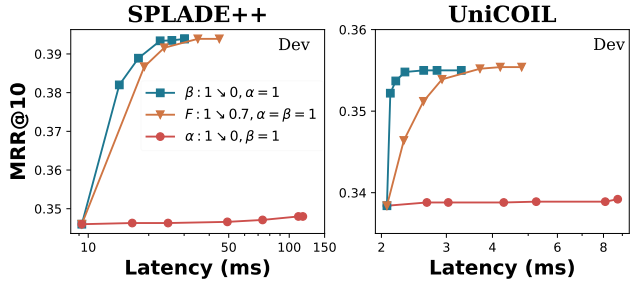}
\end{center}
  \caption{  Controlling influence of BM25 on pruning} 
  \label{fig:local}
\end{figure}

{\bf  Impact of $\alpha$ and $ \beta$ adjustment on 2GTI.}
Figure \ref{fig:local} examines the impact of adjusting parameters  $\alpha$  and
$\beta$ on global and local pruning for the MS MARCO Dev passage test set  when $k=10$
in  controlling the influence of BM25 weights
for SPLADE++ (left) 
and uniCOIL (right).
The $x$ axis corresponds to  the latency increase
while $y$ axis corresponds to the MRR@10 or nDCG@10 increase. 
The results for MS MARCO DL'19, and DL'20 are similar.

The red curve connected with dots fixes $\beta=1$ and varies $\alpha$ from 1 at the left end to  0 at the right end.  
As  $\alpha$ decreases from 1 to 0,  the latency increases because BM25 influences diminish at the global pruning level  and 
fewer documents are skipped. The relevance for  this curve is relatively flat in general
and lower than that of the blue curve, representing the global level BM25 guidance reduces time significantly, while having less impact on the relevance.

The blue curve connected with squares  fixes $\alpha=1$ at the global level and  varies $\beta$ from 1 at the left bottom end to 0 at the right top end.
Decreasing $\beta$ value is positive in general for relevance towards some point as BM25 influence decreases gradually at the local level
and after such a point, the relevance gain becomes much smaller or negative.  For example, after  $\beta$ in the blue curve in SPLADE++ 
becomes 0.3 for the Dev set, its additional decrease does not lift MRR@10 visibly anymore while the latency continues to increase, which 
indicates the relevance benefit has  reached the peak at that point. 
Our experience with the tested datasets is that the parameter  setting for 2GTI can reach a relevance peak  typically when $\alpha$ is close to 1 and $\beta$ varies between 0.3 and 1.

Note that even the above result advocates that 
$\alpha$ is  close to 1,  $\alpha$ and $\beta$ still have different values to be more effective for the tested data, 
reflecting the usefulness
 of two-level pruning control.

{\bf  Threshold under-estimation. }
In Figure \ref{fig:local}, the brown curve connected  with  triangles fixes $\alpha=\beta=1$
and under-estimates the skipping threshold by a factor of $F$ at the local and global levels. 
That behaves like GTI coupled with scaled weight filling as a special case of 2GTI. 
$F$ varies from 1 at the left bottom end to 0.7 at the right top end of this brown curve. 
As $F$ decreases, the skipping threshold becomes very loose and there is less chance  that  desired documents  are skipped. 
Then retrieval relevance can improve while retrieval time can increase substantially. Comparing with the blue curve that adjusts $\beta$,
retrieval takes a much longer time in the brown curve to reach the peak relevance, as shown in this figure, and  
the brown curve is generally placed on the right side of the blue curve. For example on the Dev set with uniCOIL, 
the brown curve with threshold under-estimation reaches  the best relevance  at mean latency  3.7ms 
while the blue curve with $\beta$ adjustment reaches the same peak  at mean  latency 2.3ms, which is 1.6x faster. 

\comments{
{\bf Efficiency driven SPLADE models.} \cite{lassance2022efficiency} proposed efficient modifications to SPLADE v2 models, for example, using L1 regularization for queries, and separating document and query encoders. The MRR of these models is weaker than the SPLADEv2 baseline we report in Table \ref{tab:maxscore-overall}, and they significantly shortened the queries. For example, the configuration VI) BT-SPLADE-L has the average query lengths without counting duplicated terms to be 5.79, which is much smaller than 23.3 of the SPLADE model we use. Comparing these two SPLADE models, the MRT increases from 121ms to 17.4ms, while the MRR@10 drops from 0.3937 to 0.3799. Table \ref{tab:options} shows that  our method can also be used together with this efficiency-driven SPLADE model, resulting in 2.2x speedup with less than 1\% MRR degradation.

{\bf 2GTI propositions verification.} Table 4 verifies Proposition \ref{prop2} that the 2GTI frame work with $\alpha=\beta$ (GTI) or $\beta=\gamma$ has better relevance than that of the SPLADE re-ranking on top-$k$ documents retrieved by BM25-T5. Intuitively, 2GTI acts like a two stage filter. The first stage is a coarse filter based on $\alpha$-combination, and lots of $\alpha$-unflavored documents are still kept; the second stage is a fine filter based on $\beta$-combination, which removes most of $\beta$-unflavored documents.
For 2GTI, $\beta$ is more leaning to the better learned representation, resulting better relevance compared with GTI, where $\beta$ is more leaning to the BM25.
}
\comments{
\subsection{Time-Rel. Tradeoff w/ k}

\todo{prove when k is small, it is more efficient to use GlobalGT than using GT to retrieve more thank k documents.}

\begin{figure}[htbp]
\begin{center}
  \includegraphics[width=\columnwidth]{LearnedScoreTopKAccelerate/figure/recall-n.png}
\end{center}
  \caption{Recall on different k.}
  \label{fig:recall}
\end{figure}

\subsection{Time-Rel. Tradeoff w/ BM25 Parameters}

\begin{figure}[htbp]
\begin{center}
  \includegraphics[width=\columnwidth]{LearnedScoreTopKAccelerate/figure/variablek1.png}
\end{center}
  \caption{BM25 k1 parameter.}
  \label{fig:k1}
\end{figure}

The choice of BM25 parameter $k1$ matters. Larger $k1$ makes BM25 scores less saturate, so that the skipping is more aggressive, resulting in lower relevance score of GlobalGT.

From \href{https://www.elastic.co/blog/practical-bm25-part-3-considerations-for-picking-b-and-k1-in-elasticsearch}{\textbf{here}}:

For k1, you should be asking, “when do we think a term is likely to be saturated?” For very long documents like books — especially fictional or multi-topic books — it’s very likely to have a lot of different terms several times in a work, even when the term isn’t highly relevant to the work as a whole. For example, “eye” or “eyes” can appear hundreds of times in a fictional book even when “eyes” are not one of the the primary subjects of the book. A book that mentions “eyes” a thousand times, though, likely has a lot more to do with eyes. You may not want terms to be saturated as quickly in this situation, so there’s some suggestion that k1 should generally trend toward larger numbers when the text is a lot longer and more diverse. For the inverse situation, it’s been suggested to set k1 on the lower side. It’s very unlikely that a collection of short news articles would have “eyes” dozens to hundreds of times without being highly related to eyes as a subject.

}

\comments{
{\bf Performance sensitivity on weight distribution.}
We have distorted the SPLADE++ weight distribution in several ways to examine the sensitivity  of 2GTI
and found that 2GTI is still effective.
For example, we apply a square root function to the  neural weight of every token  in MS MARCO passages,
the relevance score of both original retrieval and  2GTI drops to 0.356 MRR@10 due to weight distortion,
but  2GTI is still 5.0x faster  when $k=10$.

}

\label{sect:beir}

{\bf  Zero-shot performance on the BEIR datasets.}  
We evaluate the zero-shot ranking effectiveness and  response time  of 2GTI using the 13 search and semantic relatedness  
datasets from the  BEIR 
collection.  
Our training of SPLADE++ model is only based on MS MARCO data without using  any BEIR data. 
Table~\ref{tab:beir} lists the nDCG@10 scores 
of original MaxScore on SPLADE++, 
2GTI/s-Fast ($\alpha$=1, $\beta$=0.3, $\gamma$=0.05)
and GTI ($\alpha$=$\beta$=1, $\gamma$=0.05).  
The retrieval depth is $k=10$ and $k=1000$.
This table also reports mean response time of retrieval in milliseconds. 
The  SPLADE++ model trained by ourself has an  average nDCG@10 score 0.500 
close to 0.507  reported in the SPLADE++ paper~\cite{Formal_etal_SIGIR2022_splade++}.
The original MaxScore's  nDCG@10 score does not change when $k=10$ and $k=1000$. 

\comments{
\begin{table}[htpb]
\footnotesize
    \centering.
    \caption{Zero-shot performance on BEIR with SPLADE++}
            \label{tab:beir}
    \begin{tabular}{l|cr|cr|cr|cr}
    \hline
        & \multicolumn{2}{c|}{Original} & \multicolumn{2}{c|}{2GTI/s-Fast} & \multicolumn{2}{c}{GTI/s} & \multicolumn{2}{c}{GTI/0} \\
        \textbf{Dataset} & \textbf{nDCG} & \textbf{MRT} & \textbf{nDCG} & \textbf{MRT} & \textbf{nDCG} & \textbf{MRT} & k=10,1000 &  \\ \hline \hline
        \multicolumn{7}{l}{$k=10$} \\ \hline
        DBPedia & 0.447 & 99.0 & \textbf{0.449} & 34.5 & 0.306 & 10.6 & 0.450, 0.448 & 43.0 \\ 
        FiQA & \textbf{0.355} & 5.1 & 0.354 & 3.4 & 0.256 & 0.8 & 0.314, 0.354 & 1.7 \\ 
        NQ & \textbf{0.551} & 72.9 & \textbf{0.551} & 28.6 & 0.524 & 7.3 & 0.580, 0.552 & 21.2 \\ 
        HotpotQA  & \textbf{0.681} & 453 & \textbf{0.681} & 191 & 0.549 & 46.8 & 0.683, 0.682 & 174 \\
        NFCorpus & \textbf{0.351} & 0.3 & 0.347 & 0.2 & 0.327 & 0.1 & 0.338, 0.350 & 0.1 \\
        T-COVID & 0.705 & 15.9 & \textbf{0.707} & 9.9 & 0.569 & 2.6 & 0.677, 0.710 & 4.7 \\ 
        Touche-2020 & \textbf{0.291} & 8.7 & \textbf{0.291} & 3.2 & 0.237 & 1.3  & 0.322, 0.292 & 2.7 \\ 
        ArguAna & 0.446 & 8.8 & 0.448 & 4.0 & \textbf{0.454} & 4.0 & 
        0.437, 0.446 & 7.1 \\ 
        C-FEVER & \textbf{0.234} & 635 & 0.231 & 355 & 0.196 & 241 & 0.241, 0.234 & 714 \\ 
        FEVER  & \textbf{0.781} & 1028 & 0.771 & 655 & 0.590 & 160 & 0.767, 0.781 & 652\\
        Quora & \textbf{0.817} & 21.5 & \textbf{0.817} & 6.7 & 0.763 & 1.7 & 0.825, 0.817 & 6.0 \\
        SCIDOCS & \textbf{0.155} & 3.7 & \textbf{0.155} & 2.1 & 0.140 & 1.2 & 0.151, 0.151 & 2.2 \\ 
        SciFact & \textbf{0.682} & 3.2 & 0.680 & 2.9 & 0.680 & 1.7 & 0.687, 0.682 & 2.1 \\ \hline
        \textbf{Avg.} & 0.500 & - & 0.499 & - & 0.430 & - & 0.498 & - \\ \hline \hline
        \multicolumn{7}{l}{$k=1000$} \\ \hline
        \textbf{Avg.} & 0.500 & - & 0.501 & - & 0.496 & - & 0.500 & -\\ \hline
    \end{tabular}
\end{table}
}

\begin{table}[htpb]
\small
    \centering.
    \caption{Zero-shot relevance in NDCG@10 and retrieval latency in milliseconds  on BEIR datasets with SPLADE++}
            \label{tab:beir}
    \begin{tabular}{l|cr|cr|cr}
    \hline
        & \multicolumn{2}{c|}{Original MaxScore} & \multicolumn{2}{c|}{2GTI/s-Fast} & \multicolumn{2}{c}{GTI/s} \\
\textbf{Dataset} & \textbf{nDCG} & \textbf{MRT} & \textbf{nDCG} & \textbf{MRT} & \textbf{nDCG} & \textbf{MRT}\\
 \hline 
        \multicolumn{7}{l}{\textbf{$k$=10} } \\ \hline
        DBPedia & 0.447 & 99.0 & \textbf{0.449} & 34.5 & 0.306 & 10.6 \\
        FiQA & \textbf{0.355} & 5.1 & 0.354 & 3.4 & 0.256 & 0.8 \\
        NQ & \textbf{0.551} & 72.9 & \textbf{0.551} & 28.6 & 0.524 & 7.3 \\
        HotpotQA  & \textbf{0.681} & 453 & \textbf{0.681} & 191 & 0.549 & 46.8 \\
        NFCorpus & \textbf{0.351} & 0.3 & 0.347 & 0.2 & 0.327 & 0.1 \\
        T-COVID & 0.705 & 15.9 & \textbf{0.707} & 9.9 & 0.569 & 2.6 \\
        Touche-2020 & \textbf{0.291} & 8.7 & \textbf{0.291} & 3.2 & 0.237 & 1.3 \\
        ArguAna & 0.446 & 8.8 & 0.448 & 4.0 & \textbf{0.454} & 4.0\\
        C-FEVER & \textbf{0.234} & 635 & 0.231 & 355 & 0.196 & 241\\
        FEVER  & \textbf{0.781} & 1028 & 0.771 & 655 & 0.590 & 160\\
        Quora & \textbf{0.817} & 21.5 & \textbf{0.817} & 6.7 & 0.763 & 1.7\\
        SCIDOCS & \textbf{0.155} & 3.7 & \textbf{0.155} & 2.1 & 0.140 & 1.2\\
        SciFact & \textbf{0.682} & 3.2 & 0.680 & 2.9 & 0.680 & 1.7\\
\hline 
        \textbf{Average } & \textbf{0.500} & - & 0.499 & 2.0x & 0.430 & 6.1x \\
\hline \hline
        \multicolumn{7}{l}{ \textbf{$k$=1000}} \\ \hline
        \textbf{Average} & 0.500 & - & \textbf{0.501} & 2.5x & 0.496 & 2.7x \\
 \hline
    \end{tabular}
\end{table}

When $k=10$, 2GTI has  almost identical nDCG@10 scores as the original MaxScore
while 2GTI is  on average 2.0x faster  than MaxScore for these  BEIR datasets. 
When GTI runs on the same index data, its average nDCG@10 score is 0.43 MRR@10 and 
it is faster than 2GTI 
with an average 6.1x speedup over the original MaxScore for these datasets. 
Two-level pruning in 2GTI can  preserve relevance better than GTI and  
this is consistent with what we have observed for searching MS MARCO passages. 
 
When $k=1000$, the guided traversal algorithms have a better chance to retain relevance. 
2GTI has a slightly higher average relevance of 0.501 MRR@10 than that with $k=10$
and it is about 2.5x faster on average than  the original MaxScore. 
For GTI running on the same index with the same alignment, 
the average  MRR@10 is 0.496 whil average speedup 2.7x over MaxScore. 
Its relevance score is close to that of 2GTI as BM25-driven pruning under a large $k$ value can 
still keep a good recall ratio.

\comments{
{\bf Data sensitivity with distorted weight distributions.}  
We have distorted the SPLADE* weight distribution in several ways to examine the data sensitivity  of 2GTI. 
For example, we apply a square root function to every weight in SPLADE* for MS MARCO passages,
the relevance score of both original retrieval and  2GTI drops to 0.356 MRR@10 due to weight distortion,
but  2GTI is still 5.0x faster  when $k=10$.

}


\section{Two-level guidance for BMW}
\label{sect:bmw}

Two-level guidance can be adopted to control index traversal of a BMW based algorithm such as VBMW as well
because we can also view that such an algorithm  conducts a sequence of index traversal steps,
and can differentiate  its index pruning of each traversal step at the global inter-document and local intra-document levels.  
We use the same symbol notations as in the previous subsection,
assuming the posting lists are sorted by an increasing order of their document IDs. 
We still keep a position pointer in each posting list of search terms to track the current document ID $d_{t_i}$ being handled for each term $t_i$,
incrementally accumulate three scores $Global(d)$, $Local(d)$, and  $RankScore(d)$ for each document $d$ visited,
and maintain three separate score-sorted queues $Q_{Gl}$, $Q_{Lo}$, and $Q_{Rk}$. 

\comments{
\begin{figure}[htbp]
\begin{center}
  \includegraphics[width=\columnwidth]{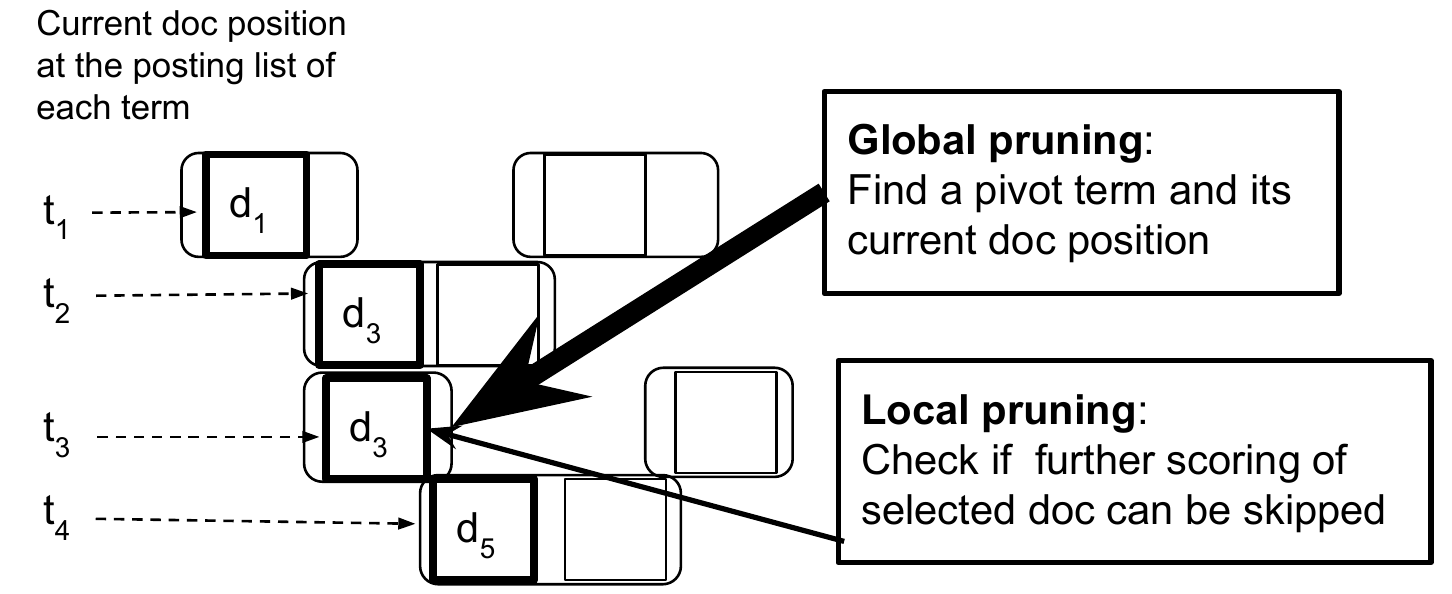}
\end{center}
  \caption{ Example of twol-level pruning in BMW}
  \label{fig:bmwex}
\end{figure}
}

\begin{figure}[htbp]
\begin{center}
  \includegraphics[width=0.5\columnwidth]{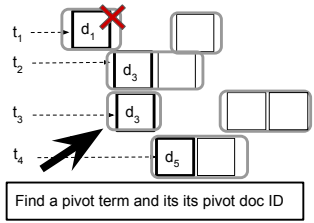}
\end{center}
  \caption{ Global pruning in BMW}
  \label{fig:BMWguidedskip}
\end{figure}

\begin{itemize}
[leftmargin=*]
\item {\bf Pruning at the global inter-document level with pivot identification.}
BMW~\cite{BMW} keeps a sorted search term list in each traversal step so that 
$d_{t_i} \leq d_{t_{i+1}}$ with $1\leq i \leq N-1$.
The pivot position that partitions these current document pointers is the smallest integer called $pivot$ such
that
$
\sum_{i=1}^{pivot}   \alpha \sigma_B[i] + (1-\alpha)  \sigma_L[i]  >  \theta_{Gl}.
$
This inequality means that  any document ID $d$ where $ d_{t_0} \leq d < d_{t_{pivot}}$  does not qualify for being
in the final top $k$ list based on  score $Global(d)$.
Then with the above pivot detection, for  $1 \leq i< pivot$, 
the current visitation pointer of the $i$-th posting list  moves to the closest block that
contains  a document ID equal to or  bigger than $d_{t_{pivot}}$. 

Figure~\ref{fig:BMWguidedskip} illustrates an example of global pruning in BMW with 4 terms and each posting list maintains a
pointer to the current  document being visited at a traversal step. Documents in each posting list are covered by a curved rectangle,
representing these lists are stored and compressed in a block-wise manner.
In the figure, the pivot identification at one traversal step locates  document $d_3$, and document IDs smaller than $d_3$ are skipped for any further consideration in
this traversal step.

\item {\bf Local pruning.} Let $d$ be the corresponding pivot document in pivot term $t_{pivot}$.
In Figure~\ref{fig:BMWguidedskip}, pivot term $t_{pivot}=t_3$  and $d=d_3$. 
A traversal procedure is executed to check if  detailed scoring of document $d$
can be avoided fully or partially and this procedure can be  similar as the one in the revised MaxScore algorithm described earlier.
As each posting list is packaged in a block manner in BMW, let $\Delta_B [x]$ and $\Delta_L[x]$ be the 
BM25 and learned block maximum weights of the block in the $x$-th posting list that contains $d$, respectively, and they are 0 if no such a block exists in this list.
The upper bound of $Local(d)$ can be tightened using the block-wise maximum weight instead of the list-wise maximum weight contributed by
each term as:
$
\sum_{i=1}^{N}   \beta \Delta_B [i]  +  (1-\beta)  \Delta_L[i].
$
When  decompressing the needed block of a posting list, 
the block-max contribution from the corresponding term  in the above expression can be replaced  by the actual BM25 and learned weights for 
document $d$. Then the upper bound of $Local(d)$ is further tightened, which can be directly compared with $\theta_{Lo}$ after every downward adjustment.
\end{itemize}

\comments{
\[
PartialBound_{Local}(d) = 
\sum_{i=1}^{N}   \beta \Delta_B [N]  +  (1-\beta)  \Delta_L[i].
\]

It repeats  the following three steps with the initial term position $x$ as $pivot$ and $x$ decreases by 1 at each loop repetition.
\begin{itemize}
\item Compute   the contribution to   document $d$ from the $x$-th to the $N$-th terms as
$PartialScore_{Local}(d)= \sum_{i=x}^{N}   \beta  w_B(i,d) + (1-\beta)  w_L(i,d)$  when $i$-th posting list contains $d$, and the contribution is 0 otherwise. 

\item 
Let $PartialBound_{Local}(d)$  be the bound for partial local score of $d $ contributed from the $1$-th to $x$-th query terms. 
Initially it is bounded as:
$
 PartialBound_{Local}(d) = \sum_{j=1}^{x}   \beta  \sigma_B[j]+   (1-\beta) \sigma_L[j].$  

As $x$ decreases, the block that contains document $d$ in the posting list of $x$-th term is uncompressed if possible,
and  the term weight of document  $d$ of may be further extracted from this block if available.
The  partial upperbound of $Local(d)$ can be further tightened using the block maximum weight instead of the list-wise maximum weight.
if  a block at a  posting list contains this document $d$.

\[
PartialBound_{Local}(d) = \sum_{j=1}^{x}   \mbox{ max weight   of the block in }
\]
\[
 j\mbox{-th posting list containing document }  d.
\]
\item At any time during the above calculation, if 
\[
PartialBound_{Local}(d) + PartialScore_{Local}(d)  \leq \Theta_{Lo}, 
\]
Further  rank scoring for document $d$ can be eliminated.
\end{itemize}

}


\begin{table}

\footnotesize
    \centering
\caption{ Guided  VBMW and MaxScore with uniCOIL on MS MARCO passages}


\label{tab:vbmw-relevance}

\begin{tabular}{c|l|lll}
\hline 
\multicolumn{1}{c|}{\textbf{Dataset}} & \multicolumn{1}{c|}{\textbf{Method}} & \multicolumn{1}{c}{\textbf{$k=10$}}   & \multicolumn{1}{c}{\textbf{$k=20$}}   & \multicolumn{1}{c}{\textbf{$k=100$}}  \\ \hline
\multirow{3}{*}{Dev} & MaxScore-2GTI     & 0.355$^\dag$, 2.6 (14.3) & 0.355$^\dag$, 3.4 (18.4) & 0.355, 5.5 (26.0) \\
& VBMW-2GTI     & 0.353$^\dag$, 4.3 (30.6) & 0.354$^\dag$, 5.2 (35.6) & 0.355, 8.6 (51.6) \\
& VBMW-GTI   & 0.339, 2.4 (14.2)  & 0.347, 3.0 (17.2) & 0.353, 5.4 (27.1) \\ \hline
\multirow{3}{*}{DL'19} & MaxScore-2GTI     & 0.714, 1.9 (12.9) & 0.713, 2.3 (14.2) & 0.713, 4.3 (18.6) \\
& VBMW-2GTI     & 0.708, 2.0 (20.0) & 0.708, 3.7 (23.2) & 0.710, 6.6 (33.1) \\
& VBMW-GTI      & 0.694, 1.7 (9.0)  & 0.700, 2.2 (11.7) & 0.710, 4.3 (17.9) \\ \hline
\multirow{3}{*}{DL'20} & MaxScore-2GTI     & 0.689, 2.8 (12.1) & 0.689, 3.3 (13.0) & 0.689, 5.3 (22.2) \\
& VBMW-2GTI     & 0.683, 3.9 (18.4) & 0.686, 4.9 (22.6) & 0.686, 8.4 (46.8) \\
& VBMW-GTI      & 0.676, 2.3 (10.5)  & 0.680, 2.9 (13.7) & 0.685, 5.3 (24.8) \\ \hline 
\end{tabular}
\end{table}

{\bf Evaluations on effectiveness  of 2GTI on VBMW}.
We choose uniCOIL to study the usefulness of VBMW-2GTI in searching the MS MARCO Dev set. 
SPLADE++ is not chosen because the test queries
are  long on average and MaxScore is faster than VBMW for such queries. 
Table~\ref{tab:vbmw-relevance} reports the performance  for VBMW-2GTI, VBMW-GTI, 
and MaxScore-2GTI for passage retrieval with uniCOIL when varying $k$.  
Each entry has a report format of $x, y (z)$ where $x$ is MRR@10 for Dev or NDCG@10 for DL'19 and DL'20.
$y$ is the MRT in ms, and $z$ is the $P_{99}$ latency in ms.
2GTI uses the fast setting with $\alpha=1$, $\beta=0.3$. For both 2GTI and GTI,  $\gamma=0.1$.
The result shows 2GTI provides a positive boost in relevance for  VBMW compared to GTI when $k$ is 10 and 20.
For $k=100$, the relevance difference is negligible.
MaxScore-2GTI is still faster   than VBMW-2GTI on average for all tested queries while their relevance difference is small.
we examine below if VBMW-2GTI can be useful for a subset of queries.


Table~\ref{tab:vbmw-Dev} reports the relevance and time  of these three algorithms 
in the passage Dev set for queries subdivided  based on their lengths and  if a query contains a stop word  or not.
That is for uniCOIL with $k=10$,   $\alpha=1$, $\beta=0.3$, and $\gamma=0.1$. 
Each entry has the same report format as in Table \ref{tab:vbmw-relevance}.
The result shows that 
VBMW-2GTI is much faster than MaxScore-2GTI for short queries ($k \leq 5$) that do not contain stop words and 
VBMW-2GTI has an edge in relevance over  VBMW-GTI while being very close to  MaxScore-2GTI for this class of queries.
The above result suggests  that   a fusion method can do well by
switching  the algorithm choice based on query characteristics and VBMW-2GTI can be used  for a class of queries.

\begin{table}

\footnotesize

    \centering
\caption{ Performance under different query classes with  $k=10$,  uniCOIL, and   MS MARCO passage Dev set}


\label{tab:vbmw-Dev}

\resizebox{1.05\columnwidth}{!}{

\begin{tabular}{l|llll}
\hline 
\textbf{QLength}           & $\leq 3$            & 4-5             & 6-7               & $\geq8$    \\ \hline \hline
\textbf{\# Q w/ SW }         & 113                & 1720              & 2175               & 2030                 \\ \hline
MaxScore-2GTI        & 0.286, 1.6 (12.2) & 0.376, 1.7 (8.9) & 0.347, 2.4 (11.5) & 0.315, 4.3 (20.9)  \\
VBMW-2GTI          & 0.289, 2.1 (15.2)          & 0.373, 2.2 (12.5)         & 0.346, 3.4 (16.1)          & 0.313, 8.5 (50.8)           \\ 
VBMW-GTI           & 0.257, 1.1 (6.1)          & 0.346, 1.4 (6.8)         & 0.334, 2.6 (12.2)          & 0.307, 7.4 (43.9)           \\ \hline \hline
\textbf{\# Q w/o SW}  & 327                & 445               & 130                & 40                   \\ \hline
MaxScore-2GTI      & 0.397, 1.3 (9.0)         & 0.430, 1.9 (9.4)        & 0.439, 3.7 (12.1)          & 0.558, 5.4 (15.9) \\
VBMW-2GTI          & 0.399, 0.9 (4.0)  & 0.430, 1.6 (6.3)  & 0.438, 3.6 (12.7) & 0.565, 5.6 (23.3)          \\ 
VBMW-GTI          & 0.385, 0.7 (3.4)  & 0.420, 1.3 (5.1) & 0.433, 2.6 (9.9) & 0.560, 4.1 (12.9)        \\ \hline 
\end{tabular}

}


\end{table}

\comments{
Figure~\ref{fig:VBMWrelevance}
Our result shows that the MRR@10 score of VBMW-GT is substantially lower VBMW-2GT on average and difference query lengths.
With 2GT (beta=20\%), both VBMW-2GT and MaxScore-2GT  reach the same level of relevance compared with the original uniCOIL. 
The result from this table indicates that VBMW-2GT is much  than MaxScore-2GT for short queries that do not contain stop words.
But for short queries with stop words, MaxScore2T is faster than VBMW-2GT, which suggests that  global pruning based on
term partitioning in MaxScore is  more effective to eliminate unnecessary documents.

The above result suggest  that   a fusion method can do well by
switching  the algorithm choice based on a query length and if  such a  query  contains a stopword or not.
For example, for long queries or quries with stop words, MaxScore 2GTI  should always be used
while VBMW-2GTI ised for short queries without stop words.
}


\comments{
\begin{figure*}[htbp]
\begin{center}
  \includegraphics[width=2\columnwidth]{LearnedScoreTopKAccelerate/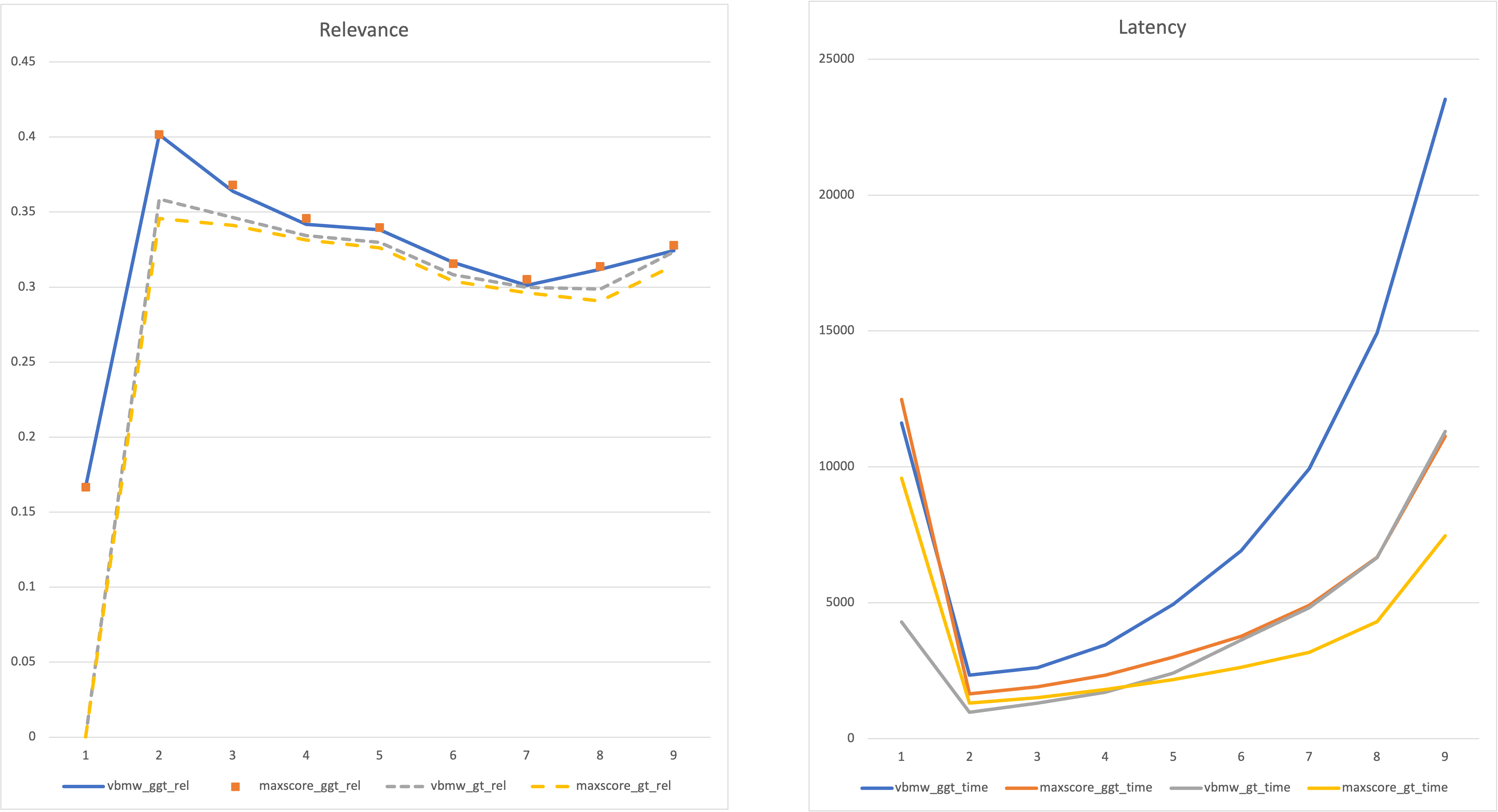}
\end{center}
  \caption{V-BMW.}
  \label{fig:vbmw-n}
\end{figure*}



On MS-MARCO Dev, $k=10$, original DeepImpact has 0.3271 MRR@10. The GT is 0.2100. GT-Const is 0.3154, GT-Scale is 0.3198. The latency:

\begin{figure}[htbp]
\begin{center}
  \includegraphics[width=\columnwidth]{LearnedScoreTopKAccelerate/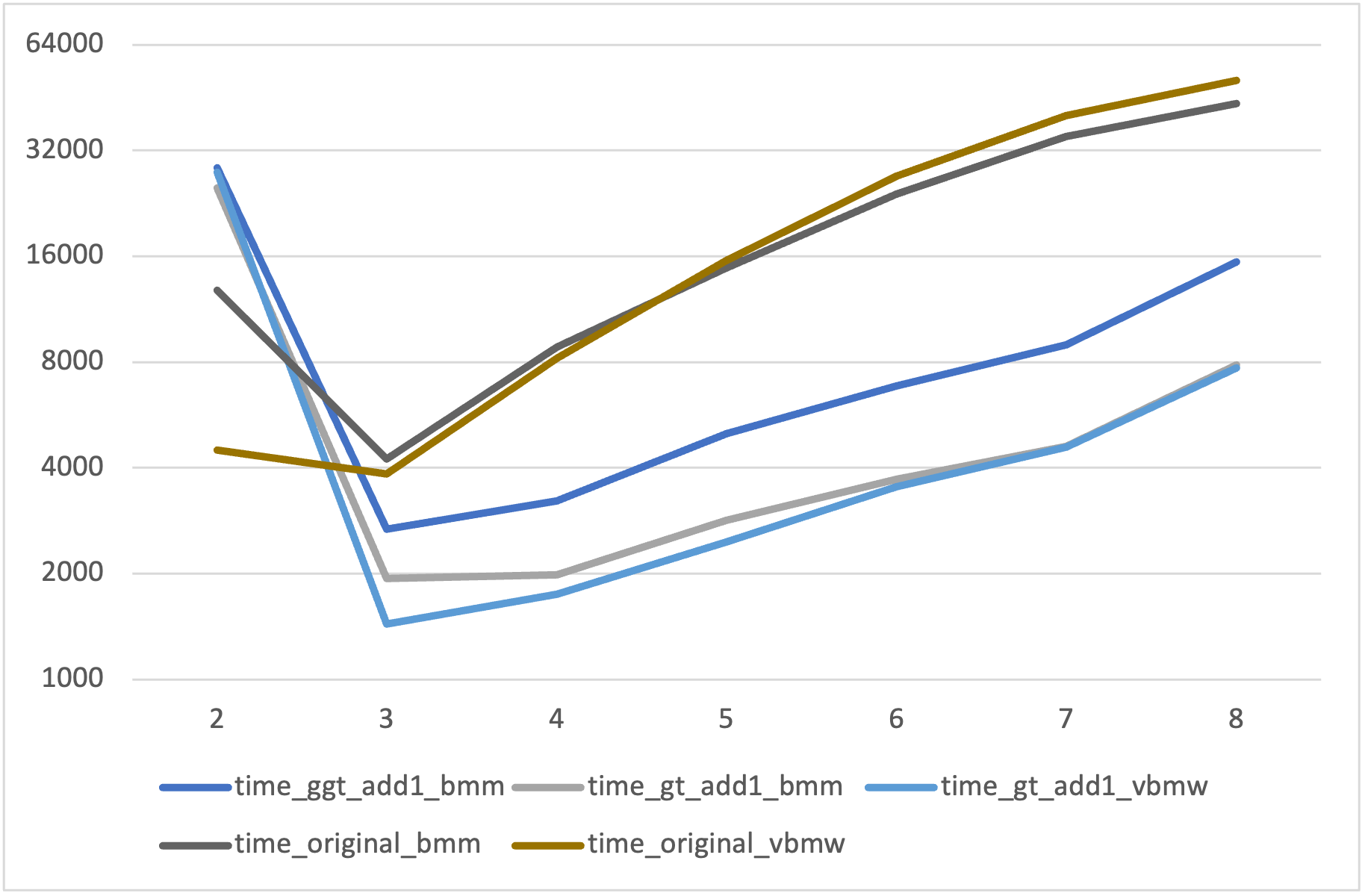}
\end{center}
  \caption{V-BMW on const completion.}
  \label{fig:vbmw-c}
\end{figure}

\begin{figure}[htbp]
\begin{center}
  \includegraphics[width=\columnwidth]{LearnedScoreTopKAccelerate/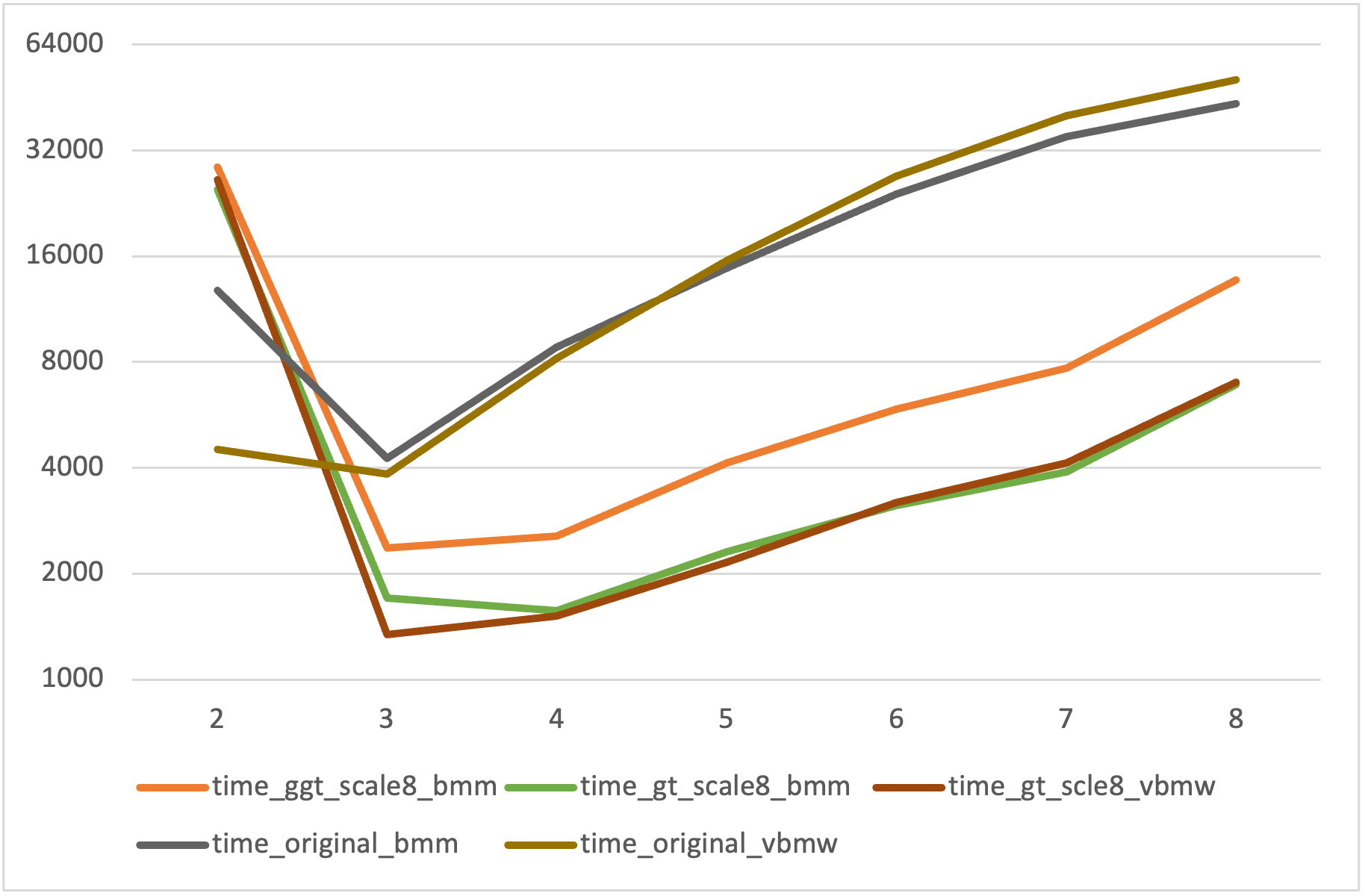}
\end{center}
  \caption{V-BMW on scale completion.}
  \label{fig:vbmw-s}
\end{figure}
}

\balance
\bibliographystyle{ACM-Reference-Format}
\normalsize
\typeout{} 
\bibliography{reference,bib/jinjin_thesis,bib/2022refer,CQbib/ranking,CQbib/mise,bib/2022extra,bib/url}
\end{document}